\newtheorem{theorem}{Theorem}[section]
\newtheorem{lemma}[theorem]{Lemma}
\newtheorem{proposition}[theorem]{Proposition}
\newcommand{\node}{n}  
\newcommand{\tree}{T}   
\newcommand{\Nodes}{N}   
\newcommand{\brel}{R}   
\newcommand{\tlabel}{L}  
\newcommand{\natn}{k}  
\newcommand{\Modals}{M} 
\newcommand{\Props}{P} 
\newcommand{\trail}{\alpha}  
\newcommand{\Trails}{Tl}  
\newcommand{\emptytrail}{\epsilon}
\newcommand{\dual}[1]{\overline{#1}}
\newcommand{\eqdef}{\stackrel{\text{\tiny{def}}}{=}}
\newcommand{\syntaxdef}{\mathrel{::=}}
\newcommand{\syntaxtable}[1]{
  \def\entry##1[##2]##3[##4]{
    {##1} & \syntaxdef& \hspace{3cm} & \!\!\!\! \mbox{##2}
    \\    &     & {##3} & \mbox{##4} }
  \def\singleentry##1[]##2[##3]{
  {##1} & \syntaxdef& {##2} & \!\!\!\! \mbox{##3} }
  \def\oris##1[##2]{
    \\    & |   & {##1} & \mbox{##2} }
  \def\orisopt##1[##2]{
    \\ \left(& |   & {##1} & \mbox{##2} \right) }
  \begin{array}{rcll}
  #1
  \end{array}
  }
\newcommand{\smallsyntax}[1]{\[\syntaxtable{#1}\]}
\newcommand{\modals}{m}  
\newcommand{\Formulas}{\Phi}  
\newcommand{\true}{\top}  
\newcommand{\var}{x}  
\newcommand{\prop}{p} 
\newcommand{\form}{\phi}  
\newcommand{\restrictedform}{\psi}  
\newcommand{\modalf}[2]{\langle #1\rangle #2} 
\newcommand{\ufixpf}[2]{\mu #1.#2}  
\newcommand{\countf}[5]{\langle {#2} \rangle^{#1}_{#4#5}#3}
\newcommand{\valuation}{V}  
\newcommand{\semf}[3]{[\![#1]\!]^{#2}_{#3}}  
\newcommand{\semfc}[3]{\semf{#1}{#2_c}{#3}}
\newcommand{\flcl}[1]{{FL}({#1})}   
\newcommand{\lean}[1]{{lean}({#1})}  
\newcommand{\fnode}[1]{n^{#1}}   
\newcommand{\fNodes}[1]{N^{#1}}  
\newcommand{\fbrel}[4]{R^{#1}(#2,#3)= #4}  
\newcommand{\stree}{\Gamma}   
\newcommand{\stroot}[1]{root(#1)}   
\newcommand{\stnodes}[1]{nodes(#1)}  
\newcommand{\nav}[2]{nav((#1),#2)}  
\newcommand{\fishrel}[2]{ R^{fl}(#1,#2)}
\newcommand{\pathvar}{\rho} 
\newcommand{\qvar}{q} 
\newcommand{\axvar}{a} 
\newcommand{\ch}{\text{child}}
\newcommand{\self}{\text{self}}
\newcommand{\pnt}{\text{parent}}
\newcommand{\desc}{\text{descendant}}
\newcommand{\descsf}{\text{desc$\!-\!$or$\!-\!$self}}
\newcommand{\anc}{\text{ancestor}}
\newcommand{\ancsf}{\text{anc$\!-\!$or$\!-\!$self}}
\newcommand{\fsib}{\text{foll$\!-\!$sibling}}
\newcommand{\psib}{\text{prec$\!-\!$sibling}}
\newcommand{\sibs}{\text{siblings}}
\newcommand{\foll}{\text{following}}
\newcommand{\prdn}{\text{preceding}}
\newcommand{\syntaxdefinition}{::=}
\newcommand{\XPath}{\text{XPath}}
\newcommand{\Axis}{\text{Axis}}
\newcommand{\PathExpr}{\text{PathExpr}}
\newcommand{\Step}{\text{Step}}
\newcommand{\NameTest}{\text{NameTest}}
\newcommand{\Qualifier}{\text{Qualifier}}
\newcommand{\QName}{\text{QName}}
\newcommand{\pathexcept}{~\text{except}~}
\newcommand{\pathunion}{~\text{union}~}
\newcommand{\pathintersect}{~\text{intersect}~}
\newcommand{\qualifnot}{\text{not}~}
\newcommand{\qualifand}{~\text{and}~}
\newcommand{\qualifor}{~\text{or}~}
\newcommand{\CountExpr}{\text{CountExpr}}
\newcommand{\Comparison}{\text{Comp}}
\newcommand{\qualifposition}{\text{position}()}
\newcommand{\qualifcount}[1]{\text{count}(#1)}
\newcommand{\precedence}{\ll}
\newcommand{\dom}{N}
\newcommand{\pathsem}[1]{\llbracket #1 \rrbracket}
\newcommand{\qualifsem}[1]{\llbracket #1 \rrbracket_{\text{Qualif}}}
\newcommand{\card}[1]{|#1|}
\newcommand{\fc}{\medtriangledown}
\newcommand{\ns}{\medtriangleright}
\newcommand{\invfc}{\medtriangleup}
\newcommand{\invns}{\medtriangleleft}
\newcommand{\mucalcEFunc}{E^\rightarrow}
\newcommand{\mucalcPFunc}{P^\rightarrow}
\newcommand{\mucalcAFunc}{A^\rightarrow}
\newcommand{\mucalcE}[2]{\mucalcEFunc\llbracket{#1}\rrbracket_{#2}} 
\newcommand{\mucalcP}[2]{\mucalcPFunc\llbracket{#1}\rrbracket_{#2}}
\newcommand{\mucalcA}[2]{\mucalcAFunc\llbracket{#1}\rrbracket_{#2}}
\newcommand{\mucalcPexFunc}{P^\leftarrow}
\newcommand{\mucalcQexFunc}{Q^\leftarrow}
\newcommand{\mucalcAexFunc}{A^\leftarrow}
\newcommand{\mucalcPex}[2]{\mucalcPexFunc\llbracket{#1}\rrbracket_{#2}}
\newcommand{\mucalcQex}[2]{\mucalcQexFunc\llbracket{#1}\rrbracket_{#2}}
\newcommand{\mucalcAex}[2]{\mucalcAexFunc\llbracket{#1}\rrbracket_{#2}}
\newcommand{\nominal}{@n}
\newcommand{\et}{\wedge}
\newcommand{\ou}{\vee}
\newcommand{\someFCverifies}{\left<\fc\right>}
\newcommand{\someNSverifies}{\left<\ns\right>}
\newcommand{\someinvFCverifies}{\left<\invfc\right>}
\newcommand{\someinvNSverifies}{\left<\invns\right>}
\newcommand{\step}[2]{\text{{#1}::}{#2}}
\newcommand{\axis}[1]{\text{#1}}
\newcommand{\axisvar}{\emph{a}}
\newcommand{\qualif}[2]{{#1}\text{[$#2$]}}
\newcommand{\op}[1]{\mathbin{\text{\small{#1}}}}
\newcommand{\nodelabel}{\sigma}
\newcommand{\startatom}{\circledS}
\newcommand{\extracttrail}[1]{\text{trail}(#1)}
\newcommand{\target}[1]{\text{tail}(#1)}
\newcommand{\head}[1]{\text{head}(#1)}
\newcommand{\subst}[2]{^{#1} \!/\! _{#2}}
\newcommand{\psitree}[1]{$#1$tree}
\newcommand{\pathn}{\rho}
\newcommand{\nmax}{\mathtt{nmax}}
\newcommand{\imax}{\mathtt{max}}
\newcommand{\slf}[1]{\text{sf}(#1)}
\newcommand{\uc}{\text{ch}}
\newcommand{\induced}{\stackrel{.}{\in}}
\begin{document}

\RRNo{7251}

\RRtitle
{On the Count of Trees}

\RRetitle
{On the Count of Trees}

\RRauthor{Everardo B\'arcenas 
\and Pierre Genev\`es 
\and Nabil Laya\"ida 
\and Alan Schmitt 
}

\RRresume{Ce document introduit une logique d'arbre d\'ecidable en temps exponentielle et qui est capable d'exprimer des contraintes de cardinalit\'e sur chemins multidirectionnelle.}
\RRkeyword{Modal Logic, XML, XPath, Schema}
\RRmotcle{Logique Modal, XML, XPath, Schema}
\RRprojets{WAM et SARDES}
\RRthemeProj{wam} 
\RRdate{April 2010}
\RRversion{2}
\RRdater{August 2010}

\URRhoneAlpes

\RRabstract{
Regular tree grammars and regular path expressions constitute core constructs
widely used in programming languages and type systems. Nevertheless, there has
been little research so far on frameworks for reasoning about path expressions
where node cardinality constraints occur along a path in a tree. We present a
logic capable of expressing deep counting along paths which may include
arbitrary recursive forward and backward navigation. The counting extensions
can be seen as a generalization of graded modalities that count immediate
successor nodes. While the combination of graded modalities, nominals, and
inverse modalities yields undecidable logics over graphs, we show that these
features can be combined in a decidable tree logic whose main features can be
decided in exponential time. Our logic being closed under negation, it may be
used to decide typical problems on XPath queries such as satisfiability, type
checking with relation to regular types, containment, or equivalence.
}

\makeRR

\section{Introduction}

A fundamental peculiarity of XML is the description of regular properties. For example, in XML schema languages the content types of element definitions rely on regular expressions. In addition, selecting nodes in such constrained trees is also done by means of regular path expressions (\`a la XPath). In both cases, it is often interesting to be able to express conditions on the frequency of occurrences of nodes. 

Even if we consider simple strings, it is well known that some formal languages easily described in English may require voluminous regular expressions. For instance, as pointed out in \cite{klarlund-tacas95}, the language $L_{2a2b}$ of all strings over $\Sigma=\{a,b,c\}$ containing at least two occurrences of $a$ \emph{and} at least two occurrences of $b$ requires a large expression, such as:
\begin{align*}
&& \Sigma^*a\Sigma^*a\Sigma^*b\Sigma^*b\Sigma^* &&\cup&& \Sigma^*a\Sigma^*b\Sigma^*a\Sigma^*b\Sigma^* \\
&\cup& \Sigma^*a\Sigma^*b\Sigma^*b\Sigma^*a\Sigma^* 
&&\cup&& \Sigma^*b\Sigma^*b\Sigma^*a\Sigma^*a\Sigma^* \\
&\cup& \Sigma^*b\Sigma^*a\Sigma^*b\Sigma^*a\Sigma^* 
&&\cup&& \Sigma^*b\Sigma^*a\Sigma^*a\Sigma^*b\Sigma^*.
\end{align*}
If we add $\cap$ to the operators for forming regular expressions, then the language  $L_{2a2b}$  can be expressed more concisely as $(\Sigma^*a\Sigma^*a\Sigma^*) \cap (\Sigma^*b\Sigma^*b\Sigma^*)$. In logical terms, conjunction offers a dramatic reduction in expression size, which is crucial when the complexity of the decision procedure depends on formula size.

If we now consider a formalism equipped with the ability to describe numerical constraints on the frequency of occurrences, we get a second (exponential) reduction in size. For instance, the above expression can be formulated as
 $(\Sigma^*a\Sigma^*)^2 \cap (\Sigma^*b\Sigma^*)^2$.
We can even write  $(\Sigma^*a\Sigma^*)^{2^{n}} \cap (\Sigma^*b\Sigma^*)^{2^{n}}$ (for any natural $n$) instead of a (much) larger expression.

Different extensions of regular expressions with intersection, counting constraints, and interleaving have been considered over strings, and for describing content models of sibling nodes in XML type languages \cite{ghelli-icdt09,Gelade-siam08,Kilpelainen-ic07}.  The complexity of the inclusion problem over these different language extensions and their combinations typically ranges from polynomial time to exponential space (see \cite{Gelade-siam08} for a survey). The main distinction between these works and the  work presented here is that we focus on counting nodes located along deep and recursive paths in trees.

When considering regular \emph{tree} languages instead of regular \emph{string} languages,  succinct syntax such as the one presented above is even more useful, as branching results in a higher combinatorial complexity. 
In the case of trees, it is often useful to express cardinality constraints not only on the sequence of children nodes, but also in a particular region of a tree, such as a subtree. Suppose, for instance, that we want to define a tree language over $\Sigma$ where there is no more than 2 ``b'' nodes. This requires a quite large regular tree type expression such as:
\newcommand{\Xstart}{x_\text{root}}
\newcommand{\Xtwobmax}{x_{b\leq2}}
\newcommand{\Xnob}{x_{\neg b}}
\newcommand{\Xonebmax}{x_{b\leq1}}
\newcommand{\lb}{\texttt{[}}
\newcommand{\rb}{\texttt{]}}
\newcommand{\xtag}[2]{#1 \lb #2 \rb}
\newcommand{\Xsimple}{x}
$$\begin{array}{lcl}
\Xstart  \!\!\! &\rightarrow&\!\!\ \xtag{b}{\Xonebmax} \mid \xtag{c}{\Xtwobmax} \mid \xtag{a}{\Xtwobmax} \vspace{0.1cm} \\
\Xtwobmax\!\!\! &\rightarrow&\!\!\! \Xnob, \xtag{b}{\Xnob},\Xnob,\xtag{b}{\Xnob},\Xnob \mid \Xnob, \xtag{b}{\Xonebmax},\Xnob  \vspace{0.1cm}\\
&& \mid \Xnob, \xtag{a}{\Xtwobmax}, \Xnob \mid \Xnob, \xtag{c}{\Xtwobmax}, \Xnob  \mid \Xonebmax \vspace{0.1cm}\\
\Xonebmax \!\!\! &\rightarrow&\!\!\ \Xnob \mid  \Xnob, \xtag{b}{\Xnob}, \Xnob \mid \xtag{a}{\Xonebmax} \mid \xtag{c}{\Xonebmax}  \vspace{0.1cm}\\
\Xnob \!\!\! &\rightarrow&\!\!\ (\xtag{a}{\Xnob} \mid \xtag{c}{\Xnob})^*
\end{array}$$
where $\Xstart$ is the starting non-terminal; $\Xnob, \Xonebmax, \Xtwobmax$ are non-terminals; the notation $\xtag{a}{\Xnob}$ describes a subtree whose root is labeled $a$ and in which there is no $b$ node; and ``,'' is concatenation.

More generally, the widely adopted notations for regular tree grammars produce very verbose definitions for properties involving cardinality constraints on the nesting of elements\footnote{This is typically the reason why the standard DTD for XHTML does not syntactically prevent the nesting of anchors, whereas this nesting is actually prohibited in the XHTML standard.}.

The problem with regular tree (and even string) grammars is that one is forced to fully expand all the patterns of interest using concatenation, union, and Kleene star. Instead, it is often tempting to rely on another kind of (formal) notation that just describes a simple pattern and additional constraints on it, which are intuitive and compact with respect to size. For instance, one could imagine denoting the previous example as follows, where the additional constraint is described using XPath notation:
$$\left(\Xsimple  \!\rightarrow\!\! (\xtag{a}{\Xsimple} \mid \xtag{b}{\Xsimple} \mid \xtag{c}{\Xsimple})^*\right) \vspace{0.1cm}  ~\et~ \text{count(/descendant-or-self::}b) \leq 2$$

Although this kind of counting operators does not increase the expressive power of regular tree grammars, it can have a drastic impact on succinctness, thus making reasoning over these languages harder (as noticed in \cite{DBLP:conf/mfcs/Gelade08} in the case of strings). Indeed, reasoning on this kind of extensions without relying on their expansion (in order to avoid syntactic blow-ups) is often tricky \cite{DBLP:conf/mfcs/GeladeGM09}. Determining satisfiability, containment, and equivalence over these classes of extended regular expressions typically requires involved algorithms with higher complexity \cite{DBLP:conf/focs/MeyerS72} compared to ordinary regular expressions.

In the present paper, we propose a succinct logical notation, equipped with a satisfiability checking algorithm, for describing many sorts of cardinality constraints on the frequency of occurrence of nodes in regular tree types. 
Regular tree types encompass most of XML  types  (DTDs, XML Schemas, RelaxNGs) used in practice today.

XPath is the standard query language for XML documents, and it is an important part of other XML technologies such as XSLT and XQuery.
XPath expressions are regular path expressions interpreted as sets of nodes selected from a given context node. 
One of the reasons why XPath is popular for web programming resides in its ability to express multidirectional navigation. Indeed, XPath expressions may use recursive navigation, to access descendant nodes, and also backward navigation, to reach previous siblings or ancestor nodes. 
Expressing cardinality restrictions on nodes accessible by recursive multidirectional paths may introduce an extra-exponential cost \cite{DBLP:conf/doceng/GenevesR05,Balder09}, or may even lead to undecidable formalisms \cite{Balder09,DBLP:conf/cade/DemriL06}.
We present in this paper a decidable framework capable of succinctly expressing cardinality constraints along deep multidirectional paths.

A major application of this logical framework is the decision of problems found in the static analysis of programming languages manipulating XML data. For instance, since the logic is closed under negation, it can be used to solve subtyping problems such as XPath containment in the presence of tree constraints. Checking that a query $q$ is contained in a query $p$ with this logical approach amounts to verifying the validity of $q\Rightarrow p$, or equivalently, the unsatisfiability of $q \wedge \neg p$.

\paragraph{Contributions}
We extend a tree logic with a succinct notation for counting operators. These operators allow arbitrarily deep and recursive counting constraints. We present a sound and complete algorithm for checking satisfiability of logical formulas. We show that its complexity is exponential in the size of the succinct form. 

\paragraph{Outline}
We introduce the logic in Section~\ref{sec:logic}.   Section~\ref{sec:application} shows how the logic can be applied in the XML setting, in particular for the static analysis of XPath expressions and of common schemas containing constraints on the frequency of occurrence of nodes. The decision procedure and the proofs of soundness, completeness, and complexity are presented in Section~\ref{sec:algo}.
Finally, we review related work in Section~\ref{sec:relatedwork} before concluding in Section~\ref{sec:conclusion}.

\section{Counting Tree Logic} \label{sec:logic}
We introduce our syntax for trees, define a notion of trails in trees, then present the syntax and semantics of logical formulas.

\subsection{Trees}\label{subsec:trees}
%
We consider finite trees which are node-labeled and sibling-ordered. Since there is a well-known bijective encoding between $n\!-\!$ary and binary trees, we focus on binary trees without loss of generality.
Specifically, we use the encoding represented in Figure~\ref{fig:depthlevels}, 
where the binary representation preserves the first child of a node and append sibling nodes as second successors.
\begin{figure}
\begin{center}
\includegraphics[keepaspectratio,width=7cm]{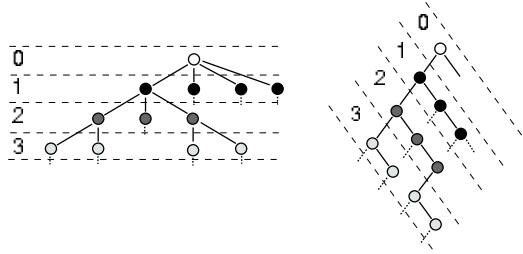}
\end{center}
\caption{$n\!-\!$ary to binary trees}
\label{fig:depthlevels}
\end{figure}

The structure of a tree is built upon modalities ``$\fc$'' and  ``$\ns$''. Modality  ``$\fc$'' labels the edge between a node and its first child. Modality  ``$\ns$'' labels the edge between a node and its next sibling.
Converse modalities  ``$\invfc$'' and  ``$\invns$'' respectively label the same edges in the reverse direction.

We define a Kripke semantics for our tree logic, similar to the one of modal logics \cite{685998}. We write   $\Modals=\{\fc,\ns,\invfc,\invns\}$ for the set of modalities. For $m\in \Modals$ we denote by $\dual{m}$ the corresponding inverse modality ($\dual{\fc}=\invfc, \dual{\ns}=\invns, \dual{\invfc}=\fc, \dual{\invns}=\ns$). We also consider a countable alphabet $\Props$ of {\em propositions} representing names of nodes. A node is always labeled with exactly one proposition.

A tree is defined as a tuple $(\Nodes,\brel,\tlabel)$, where $\Nodes$ is a finite set of nodes; $\brel$ is a partial mapping from $\Nodes\times\Modals$ to $\Nodes$ that defines a tree structure;\footnote{For all $n,n' \in \Nodes,m \in \Modals$, $\brel(n,m) = n' \iff \brel(n', \dual{m}) = n$; for all $n \in \Nodes$ except one (the \emph{root}), exactly one of $\brel(n,\invfc)$ or $\brel(n,\invns)$ is defined; for the root, neither $\brel(n,\invfc)$ nor $\brel(n,\invns)$ is defined.} and $\tlabel$ is a labeling function from $\Nodes$ to $\Props$.

\subsection{Trails}\label{subsec:trails}

Trails are defined as regular expressions formed by modalities, as follows: 
\begin{align*}
\trail &::= \trail_0 \mid \trail_0^\star \mid \trail_0^\star, \trail \\
\trail_0 &::= \modals \mid \trail_0,\trail_0 \mid \trail_0\shortmid\trail_0
\end{align*}
We restrict trails to sequences of repeated subtrails (which themselves contain no repetition) followed by a subtrail (with no repetition). Since we do not consider infinite paths, we also disallow trails where both a subtrail and its converse occurs under the scope of the recursion operator, thus ensuring cycle-freeness (see Section \ref{cyclefreeness}). These restrictions on trails allow us to prove the completeness of our approach while retaining the ability to express many counting formulas, such as the ones of XPath.

Trails are interpreted as sets of \emph{paths}. A path, written $\pathn$, is a sequence of modalities that belongs to the regular language denoted by the trail, written $\pathn \in \trail$.

In a given tree, we say that there is a {\em trail $\trail$ from the node $\node_0$ to the node $\node_\natn$}, written $\path{n_0}{\trail}{n_\natn}$,
if and only if there is a sequence of nodes $\node_0, \ldots, n_k$ and a path $\pathn = \modals_1, \ldots, \modals_k$ such that $\pathn \in \trail$, and $\brel(\node_j,\modals_{j+1})=\node_{j+1}$ for every $j=0,\ldots,\natn-1$.  

\subsection{Syntax of Logical Formulas}\label{subsec:syntax}

The syntax of logical formulas is given in Figure \ref{normalformsyn}, where $m \in \Modals$ and $k \in \mathbb{N}$. Formulas written $\form$ may contain counting subformulas, whereas formulas written $\restrictedform$ cannot. We thus disallow counting under counting or under fixpoints. We also restrict formulas to \emph{cycle-free} formulas, as detailed in Section~\ref{cyclefreeness}. The syntax is shown in negation normal form. The negation of any closed formula (i.e., with no free variable) built using the syntax of Figure \ref{normalformsyn} may be transformed into negation normal form using the usual De Morgan rules together with rules given in Figure~\ref{normalneg}. 
When we write $\neg\form$, we mean its negated normal form.
\begin{figure}
  
  \smallsyntax{
  \Formulas \ni \entry       \form    [formula]
    \top  \quad | \quad  \neg\top           [true, false]
  \oris \prop ~\quad | \quad \neg \prop             [atomic prop (negated)]
  \oris        x [recursion variable]
  \oris        \phi \ou \phi [disjunction]
  \oris        \phi \et \phi [conjunction]
  \oris  \modalf{\modals}\phi  \quad | \quad  \neg \modalf{\modals}\true [modality (negated)]

  \oris \countf{}{\trail}{\restrictedform}{\leq}{\natn}  \quad | \quad \countf{}{\trail}{\restrictedform}{>}{\natn} [counting]

  \oris        \ufixpf{\var}{\restrictedform} [fixpoint operator] \\
  \entry       \restrictedform    []
    \top  \quad | \quad  \neg\top           []
  \oris \prop ~\quad | \quad \neg \prop             []
  \oris        x []
  \oris        \restrictedform \ou \restrictedform []
  \oris        \restrictedform \et \restrictedform []
  \oris  \modalf{\modals}\restrictedform  \quad | \quad  \neg \modalf{\modals}\true []
  \oris        \ufixpf{\var}{\restrictedform} []
  } 
\caption{Syntax of Formulas (in Normal Form).} \label{normalformsyn}
\end{figure}

\begin{figure}
\begin{align*}
\neg \modalf{\modals}{\form}  &\equiv  \neg \modalf{\modals}{\true} \vee \modalf{\modals}{\neg \form} 
& \neg \ufixpf{\var}{\restrictedform}  &\equiv \ufixpf{\var}{\neg \restrictedform\{\subst{\var}{\neg \var}\}}\\
 \neg \countf{}{\trail}{\restrictedform}{\leq}{\natn} &\equiv \countf{}{\trail}{\restrictedform}{>}{\natn} &  \neg \countf{}{\trail}{\restrictedform}{>}{\natn} &\equiv \countf{}{\trail}{\restrictedform}{\leq}{\natn}
\end{align*}
\caption{Reduction to Negation Normal Form.}\label{normalneg}
\end{figure}

Defining an {\em equality} operator for counting formulas is straightforward using the other counting operators.
\begin{align*}
\countf{}{\trail}{\restrictedform}{=}{\natn} &\equiv \countf{}{\trail}{\restrictedform}{>}{(\natn-1)} \wedge \countf{}{\trail}{\restrictedform}{\leq}{\natn} &&\text{if $k>0$}\\
\countf{}{\trail}{\restrictedform}{=}{0} &\equiv  \countf{}{\trail}{\restrictedform}{\leq}{0}
\end{align*}

\subsection{Semantics of Logical Formulas}\label{subsec:semantics}
A formula is interpreted as a set of nodes in a tree. A model of a formula is a tree such that the formula denotes
a non-empty set of nodes in this tree. 
A counting formula $\countf{}{\trail}{\restrictedform}{>}{\natn}$ satisfied at a given node $n$ means that there are at least $\natn + 1$  nodes satisfying $\restrictedform$ that can be reached from $n$ through the trail $\trail$. 
A counting formula $\countf{}{\trail}{\restrictedform}{>}{\natn}$ is thus interpreted as the set of nodes such that, for each of them, the previously described condition holds. For example, the formula $\prop_1\wedge \modalf{\fc}{\countf{}{\ns^*}{\prop_2}{>}{5}}$, denotes $\prop_1$ nodes
with strictly more than $5$ children nodes named $\prop_2$.

In order to present the formal semantics of formulas, we introduce valuations, written $\valuation$, which relate variables to sets of nodes.
We write $\valuation[\subst{\Nodes^\prime}{x}]$, where $\Nodes^\prime$ is a subset of the nodes, for the valuation defined as $\valuation[\subst{\Nodes^\prime}{x}](y) = \valuation(y)$ if $x \neq y$, and $\valuation[\subst{\Nodes^\prime}{x}](x) = \Nodes^\prime$.
%
%
Given a tree $\tree=(\Nodes,\brel, \tlabel)$ and a valuation $\valuation$, the formal semantics of formulas is given in Figure~\ref{formsem}.




Note that the function $f :Y \rightarrow \semf{\psi}{\tree}{\valuation[\subst{Y}{x}]}$ is monotone, and the denotation of $\ufixpf{\var}{\restrictedform}$ is a fixed point \cite{tarski1955}.

\begin{figure}
\begin{align*}
& \semf{\true}{\tree}{\valuation} &&=&& \Nodes \\
& \semf{\neg \true}{\tree}{\valuation} &&=&& \emptyset \\
& \semf{\prop}{\tree}{\valuation} &&=&& \{\node \mid \tlabel(\node)=\prop\} \\
& \semf{\neg \prop}{\tree}{\valuation} &&=&& \{\node \mid \tlabel(\node)\neq \prop\} \\
& \semf{\var}{\tree}{\valuation} &&=&& \valuation(\var)\\
& \semf{\form_1\vee \form_2}{\tree}{\valuation} &&=&& \semf{\form_1}{\tree}{\valuation} \cup \semf{\form_2}{\tree}{\valuation}\\
& \semf{\form_1\wedge \form_2}{\tree}{\valuation} &&=&& \semf{\form_1}{\tree}{\valuation} \cap \semf{\form_2}{\tree}{\valuation}\\
& \semf{\modalf{\modals}{\form}}{\tree}{\valuation} &&=&& \{\node \mid \brel(\node,\modals) \in \semf{\form}{\tree}{\valuation}\}\\
& \semf{\neg \modalf{\modals}{\true}}{\tree}{\valuation} &&=&& \{\node \mid \brel(\node,\modals) \text{ undefined}\} \\
& \semf{\countf{}{\alpha}{\restrictedform}{\leq}{\natn}}{\tree}{\valuation} &&=&&
    \{\node \mid~~ \card{\{\node^\prime\in \semf{\restrictedform}{\tree}{\valuation} \mid \path{\node}{\alpha}{\node^\prime}\}}\leq  \natn \}\\
& \semf{\countf{}{\alpha}{\restrictedform}{>}{\natn}}{\tree}{\valuation} &&=&&
    \{\node \mid~~  \card{\{\node^\prime\in \semf{\restrictedform}{\tree}{\valuation} \mid \path{\node}{\alpha}{\node^\prime}\}}>  \natn \}\\
& \semf{\ufixpf{\var}{\restrictedform}}{\tree}{\valuation} &&=&& \bigcap\{\Nodes^\prime \mid \semf{\restrictedform}{\tree}{\valuation[\subst{\Nodes^\prime}{x}]}\subseteq \Nodes^\prime\}
\end{align*}
\caption{Semantics of Formulas.} \label{formsem}
\end{figure}

Intuitively, propositions denote the nodes where they occur;
negation is interpreted as set complement; disjunction and conjunction are respectively set union and intersection;
the least fixpoint operator performs finite recursive navigation; 
and the counting operator denotes nodes such that the ones accessible from this node through a trail fulfill a cardinality restriction. A logical formula is said to be {\em satisfiable} iff it has a model, i.e., there exists a tree for which the semantics of the formula is not empty.




\subsection{Cycle-Freeness} \label{cyclefreeness}
Formal definition of cycle-freeness can be found in \cite{geneves-pldi07}.
Intuitively, in a cycle-free formula, fixpoint variables must occur under a modality but cannot occur in the scope of both a modality and its converse. 
For instance, the formula $\ufixpf{\var}{\modalf{\fc}{\var}\vee\modalf{\invfc}{\var}}$ is not cycle-free.
In a cycle-free formula, the number of modality cycles (of the form $m \dual{m}$) is bound independently of the number of times fixpoints are unfolded (i.e., by replacing a fixpoint variable with the fixpoint itself). A fundamental consequence of the restriction to cycle-free formulas is that, when considering only finite trees, the interpretations of the greatest and smallest fixpoints coincide. This greatly simplifies the logic.

Here, we also restrict our approach to cycle-free formulas. We thus need to extend this notion to the counting operators, and more precisely to the trails that occur in them.
Cycle-free trails are trails where both a subtrail and its converse do not occur under the scope of the recursion operator. We thus restrict the formulas under consideration to cycle-free formulas whose counting operators contain cycle-free trails.

\begin{lemma}\label{lem:finite_unfolding}
  Let $\form$ be a cycle-free formula, and $\tree$ be a tree for which $\semf{\form}{\tree}{\emptyset} \neq \emptyset$. Then there is a finite unfolding $\form^\prime$ of the fixpoints of $\form$ such that $\semf{\form^{\prime}\{\subst{\neg\true}{\ufixpf{\var}{\restrictedform}}\}}{\tree}{\emptyset} = \semf{\form}{\tree}{\emptyset}$.
\end{lemma}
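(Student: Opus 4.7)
The plan is to reduce the claim to the standard Kleene iteration on a finite powerset lattice. Since the tree $\tree = (\Nodes,\brel,\tlabel)$ is finite, the lattice $2^\Nodes$ is finite, and any monotone operator on it reaches its least fixpoint in at most $n = |\Nodes|$ steps. This bound will drive the syntactic unfolding.

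First, I would verify the following key semantic computation. Fix a subformula $\ufixpf{\var}{\restrictedform}$ and a valuation $\valuation$, and let $F : 2^\Nodes \to 2^\Nodes$ be the monotone operator $F(Y) = \semf{\restrictedform}{\tree}{\valuation[\subst{Y}{\var}]}$. By Tarski--Kleene on the finite lattice $2^\Nodes$, the ascending chain $\emptyset \subseteq F(\emptyset) \subseteq F^2(\emptyset) \subseteq \cdots$ stabilizes, and $\semf{\ufixpf{\var}{\restrictedform}}{\tree}{\valuation} = F^n(\emptyset)$. Define the $k$-fold syntactic unfolding $(\ufixpf{\var}{\restrictedform})^k$ with base case $(\ufixpf{\var}{\restrictedform})^0 = \neg\true$ and $(\ufixpf{\var}{\restrictedform})^{k+1} = \restrictedform\{\subst{(\ufixpf{\var}{\restrictedform})^k}{\var}\}$. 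A straightforward induction on $k$ shows $\semf{(\ufixpf{\var}{\restrictedform})^k}{\tree}{\valuation} = F^k(\emptyset)$; in particular, for $k = n$ the semantics of the unfolded formula coincides with that of $\ufixpf{\var}{\restrictedform}$. Crucially, $(\ufixpf{\var}{\restrictedform})^n$ is precisely the $n$-fold syntactic unfolding of $\ufixpf{\var}{\restrictedform}$ (keeping $\ufixpf{\var}{\restrictedform}$ at the innermost position) after substituting $\neg\true$ for the remaining innermost occurrences of $\ufixpf{\var}{\restrictedform}$.

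Next, I would lift this to the whole formula by induction on the fixpoint nesting depth $d(\form)$. The base case $d(\form) = 0$ is trivial: take $\form^\prime = \form$, since no fixpoint subformula exists for $\{\subst{\neg\true}{\ufixpf{\var}{\restrictedform}}\}$ to act on. For the inductive step, pick an innermost fixpoint subformula $\ufixpf{\var}{\restrictedform}$ (so $\restrictedform$ is fixpoint-free). By compositionality of the semantics and the computation above, replacing $\ufixpf{\var}{\restrictedform}$ in $\form$ with its $n$-fold unfolding --- and then substituting $\neg\true$ for the innermost remaining $\ufixpf{\var}{\restrictedform}$ --- yields a formula $\form_1$ with $\semf{\form_1}{\tree}{\emptyset} = \semf{\form}{\tree}{\emptyset}$ and $d(\form_1) < d(\form)$. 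Applying the induction hypothesis to $\form_1$ gives an unfolding of its outer fixpoints that, after substituting $\neg\true$ for their remaining occurrences, has the same semantics. Composing the two unfoldings (before any $\neg\true$-substitution is ``baked in'') yields the desired $\form^\prime$.

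The routine but delicate step is the bookkeeping for nested fixpoints: when an inner fixpoint gets unfolded, the outer unfolding duplicates it, so one must ensure that the substitution $\{\subst{\neg\true}{\ufixpf{\var}{\restrictedform}}\}$ is applied uniformly to all residual copies. This is handled cleanly by inducting on depth and never unfolding the same binder twice within the same pass. The main conceptual obstacle is therefore not the termination of unfolding---which is immediate from finiteness of $2^\Nodes$---but the fact that we must commit to a single $\form^\prime$ whose \emph{syntactic} $\neg\true$-substitution captures the result of $n$ Kleene iterations at every fixpoint simultaneously; the depth induction is precisely what organizes this commitment. The cycle-freeness hypothesis enters only indirectly, ensuring that the logic has the finite-tree model property assumed throughout so that $|\Nodes|$ is indeed a finite bound.
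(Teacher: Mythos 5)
Your proof is correct in substance, but it takes a genuinely different route from the paper. The paper disposes of the lemma in one line: cycle-free counting formulas can be translated into (exponentially larger) cycle-free non-counting formulas, after which the statement is exactly the finite-unfolding lemma of the prior work \cite{geneves-pldi07}, which is cited rather than reproved. You instead give a direct, self-contained semantic argument: Kleene iteration on the finite lattice $2^\Nodes$ shows each least fixpoint is reached after $|\Nodes|$ iterations, the $k$-fold syntactic unfolding with $\neg\true$ at the base computes exactly $F^k(\emptyset)$, and an induction on fixpoint nesting depth assembles a single $\form^\prime$. Your approach buys an explicit bound ($|\Nodes|$ unfoldings per binder) and avoids relying on the counting-to-non-counting translation; the paper's approach buys brevity and inherits whatever care \cite{geneves-pldi07} took with nested binders. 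Two small corrections to your closing remarks: first, since fixpoints may occur \emph{inside} counting operators (though not the reverse), you should note explicitly that your per-valuation equivalence is what licenses replacing a fixpoint under a counting operator, whose semantics depends only on the set $\semf{\restrictedform}{\tree}{\valuation}$; second, your claim that cycle-freeness ``ensures the finite-tree model property'' is off the mark --- trees are finite by definition in this paper, so your argument in fact never uses cycle-freeness at all. That is not an error (the lemma as stated is indeed a pure least-fixpoint fact on finite structures); cycle-freeness earns its keep in the \emph{consequence} drawn right after the lemma, namely closure under negation without greatest fixpoints, which needs $\mu$ and $\nu$ interpretations to coincide and genuinely fails for formulas like $\ufixpf{\var}{\modalf{\fc}{\var}\vee\modalf{\invfc}{\var}}$.
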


\begin{proof}
  As cycle-free counting formulas may be translated into (exponentially larger) cycle-free non-counting formulas, the proof is identical to the one in \cite{geneves-pldi07}.
\end{proof}

As a consequence, our logic is closed under negation even without greatest fixpoints.

\subsection{Global Counting Formulas and Nominals}\label{subsec:globalformula}

To conclude this section, we turn to an illustration of the expressive power of our logic. An interesting consequence of the inclusion of backward axes in trails is the ability to reach every node in the tree from any node of the tree, using the trail $(\invfc|\invns)^\star,(\fc|\ns)^\star$.\footnote{Note that this trail is cycle-free.} We can thus select some nodes depending on some global counting property. Consider the following formula, where $\#$ stands for one of the comparison operators $\leq$, $>$, or $=$.
\[\countf{}{(\invfc|\invns)^\star,(\fc|\ns)^\star}{\form_1}{\#}{\natn}\]
Intuitively, this formula counts how many nodes in the whole tree satisfy $\form_1$. For each node of the tree, it selects it if and only if the count is compatible with the comparison considered. The interpretation of this formula is thus either every node of the tree, or none.
It is then easy to restrict the selected nodes to some that satisfy another formula $\form_2$, using intersection.
\[(\countf{}{(\invfc|\invns)^\star,(\fc|\ns)^\star}{\form_1}{\#}{\natn}) \wedge \form_2\]
This formula select every node satisfying $\form_2$ if and only if there are $\#\natn$ nodes satisfying $\form_1$, which we write as follows.
\[ \form_1\#\natn \implies \form_2 \]
We can now express existential properties, such as ``select every node satisfying $\form_2$ if there exists a node satisfying $\form_1$''.
\[ \form_1 > 0 \implies \form_2 \]
We can also express universal properties, such as ``select every node satisfying $\form_2$ if every node satisfies $\form_1$''.
\[ (\neg\form_1) \leq 0 \implies \form_2 \]

Another way to interpret global counting formulas is as a generalization of the so-called nominals in the modal logics community~\cite{DBLP:conf/cade/SattlerV01}. Nominals are special propositions whose interpretation is a singleton (they occur exactly once in the model). They come for free with the logic. A nominal, denoted ``$\nominal$'', corresponds to the following global counting formula:
\[[\countf{}{(\invfc|\invns)^\star,(\fc|\ns)^\star}{n}{=}{1}] \wedge n \] where $n$ is a new fresh atomic proposition.

One may need for nominals to occur in the scope of counting formulas. As we disallow counting under counting, we propose the following alternative encoding of nominals in these cases: 
\begin{align*}
\nominal \equiv n \wedge\neg [& \desc(n) \vee \anc(n) \vee \\
							& \ancsf(\sibs(\descsf(n))) ], 
\end{align*}
where:
\begin{align*}
\desc(\restrictedform)&= \modalf{\fc}{\ufixpf{\var}{\restrictedform \vee \modalf{\fc}{\var} \vee \modalf{\ns}{\var}}};\\
 \fsib(\restrictedform) &= \ufixpf{\var}{\modalf{\ns}{\restrictedform} \vee \modalf{\ns}{\var}};\\
 \psib(\restrictedform) &= \ufixpf{\var}{\modalf{\invns}{\restrictedform} \vee \modalf{\invns}{\var}};\\
 \descsf(\restrictedform) &= \ufixpf{x}{\restrictedform \vee \modalf{\fc}{\ufixpf{y}{x \vee \modalf{\ns}{y}} }};\\
 \anc(\restrictedform) &= \ufixpf{\var}{\modalf{\invfc}{(\restrictedform \vee \var)} \vee \modalf{\invns}{\var}};\\ 
 \ancsf(\restrictedform)   &=  \ufixpf{x}{\restrictedform \vee \ufixpf{y}{\modalf{\invfc}{(y \vee x)} \vee \modalf{\invns}{y}} };\\
 \sibs(\restrictedform) &= \fsib(\restrictedform) \vee \psib(\restrictedform).
\end{align*}


\section{Application to XML Trees} \label{sec:application} 

\subsection{XPath Expressions}\label{typespaths} \label{sec:xpath}

XPath \cite{xpath} was introduced as part of the W3C XSLT transformation language to have a non-XML format for selecting nodes and computing values from an XML
document (see \cite{geneves-pldi07} for a formal presentation of XPath). Since then, XPath has become part of
several other standards, in particular it forms the ``navigation subset'' of the XQuery language.

In their simplest form XPath expressions look like ``directory navigation paths''.  For example, the XPath
\begin{verbatim}
  /company/personnel/employee
\end{verbatim}
navigates from the root of a document through the top-level ``company'' node to its
``personnel'' child nodes and on to its ``employee'' child nodes.  The result of
the evaluation of the entire expression is the set of all the ``employee''
nodes that can be reached in this manner.  At each step in the navigation, the selected nodes for that step can be
filtered with a predicate test. Of special interest to us are the predicates that count nodes or that test the position of the selected node in the previous step's selection. For example, if we ask for
\begin{verbatim}
  /company/personnel/employee[position()=2]
\end{verbatim}
then the result is \emph{all} employee nodes that are the \emph{second} employee node (in document order) among
the employee child nodes of each personnel node selected by the previous step.

XPath also makes it possible to combine the capability of searching along
``axes'' other than the shown ``children of'' with counting constraints. 
For example, if we ask for
\begin{verbatim}
/company[count(descendant::employee)<=300]/name
\end{verbatim}
then the result consists of the company names with less than 300 employees in total (the axis ``descendant'' is the transitive closure of the default -- and often omitted -- axis ``child''). 

The syntax and semantics of Core XPath expressions are respectively given on Figure~\ref{fig:xpath-syntax} and Figure~\ref{fig:xpath-semantics}.
An XPath expression is interpreted as a relation between nodes. The considered XPath fragment allows absolute and relative paths, path union, intersection, composition, as well as node tests and qualifiers with counting operators, conjunction, disjunction, negation, and path navigation. Furthermore, it supports all XPath axes allowing multidirectional navigation.

\begin{figure}
\begin{align*}
   \Axis      \syntaxdefinition &  \text{self} \mid \text{child} \mid \text{parent} \mid \text{descendant} \mid \text{ancestor} \mid\\ 
   &   \text{following-sibling}    \mid \text{preceding-sibling} \mid \\ 
   & \text{following} \mid \text{preceding} \\
  \NameTest  \syntaxdefinition &  \QName \mid * \\
  \Step  \syntaxdefinition & \Axis\text{::}\NameTest \\
   \PathExpr  \syntaxdefinition &  \PathExpr/\PathExpr
\mid \PathExpr[\Qualifier] \mid \Step \\
 \Qualifier  \syntaxdefinition & \PathExpr \mid \CountExpr \mid\qualifnot \Qualifier \mid \\
  & \Qualifier \qualifand \Qualifier  \mid \Qualifier \qualifor \Qualifier \mid \nominal \\
  \CountExpr \syntaxdefinition & \qualifcount{\PathExpr'}~\Comparison~k \\ 
 \PathExpr'  \syntaxdefinition &  \PathExpr'/\PathExpr'
\mid \PathExpr'[\Qualifier'] \mid \Step \\
\Qualifier'  \syntaxdefinition & \PathExpr' \mid\qualifnot \Qualifier' \mid \Qualifier' \qualifand \Qualifier' \\ 
& \mid \Qualifier' \qualifor \Qualifier' \mid \nominal \\
 \Comparison  \syntaxdefinition & \leq  \mid > \mid \geq   \mid < \mid =\\
  \XPath \syntaxdefinition & \PathExpr \mid /\PathExpr  \mid \XPath \pathunion \PathExpr \mid\\
   & \XPath \pathintersect \PathExpr  \mid \XPath \pathexcept \PathExpr 
   \end{align*}
   \caption{Syntax of Core XPath Expressions.}\label{fig:xpath-syntax}
\end{figure}

\begin{figure}
\begin{align*}
   \pathsem{ \Axis\text{::}\NameTest }      =& \{ (x,y) \in \dom^2 \mid x(\Axis)y \text{ and }  \\ 
                                                            & y \text{ satisfies } \NameTest  \} \\
   \pathsem{ /\PathExpr }    =&  \{ (r,y) \in \pathsem{\PathExpr } \mid \\ & r \text{ is the root}   \} \\
   \pathsem{ P_1/P_2}    =& \pathsem{P_1} \circ \pathsem{P_2}\\
   \pathsem{P_1 \pathunion P_2}    =&  \pathsem{P_1} \cup \pathsem{P_2}\\
   \pathsem{ P_1 \pathintersect P_2 }    =& \pathsem{P_1} \cap \pathsem{P_2}\\
   \pathsem{ P_1 \pathexcept P_2 }    =& \pathsem{P_1} \setminus \pathsem{P_2}\\
   \pathsem{ \PathExpr[\Qualifier] }    = & \{ (x,y) \in \pathsem{\PathExpr} \mid \\ & y \in \qualifsem{\Qualifier} \}\\ \\
   \qualifsem{  \PathExpr  }    =& \{ x \mid  \exists y. (x,y) \in \pathsem{\PathExpr}  \} \\
   \qualifsem{\qualifcount{\PathExpr}~\Comparison~k } =& \{ x \in \dom \mid \\& \card{\left\{y \mid (x,y) \in \pathsem{\PathExpr} \right\}} \\ & \text{satisfies }  \Comparison~k \} \\
   \qualifsem{  \qualifnot Q  }   =& \dom \setminus \qualifsem{Q}\\
   \qualifsem{Q_1 \qualifand Q_2}    =& \qualifsem{Q_1} \cap \qualifsem{Q_1} \\
   \qualifsem{ Q_1 \qualifor Q_2 }   =&  \qualifsem{Q_2} \cup \qualifsem{Q_2}
   \end{align*}
   \caption{Semantics of Core XPath Expressions}\label{fig:xpath-semantics}
 \end{figure}

It was already observed in \cite{DBLP:conf/doceng/GenevesR05,Balder09} that using positional information in paths reduces to counting (at the cost of an exponential blow-up).
For example, the expression
\begin{verbatim}
child::a[position()=5]
\end{verbatim}
first selects the ``\texttt{a}'' nodes occurring as children of the current context node, and then keeps those occurring at the $5$th position. This expression can be rewritten into the semantically equivalent expression:
\begin{verbatim}
child::a[count(preceding-sibling::a)=4]
\end{verbatim}
which constraints the number of preceding siblings named ``\texttt{a}'' to $4$, so that the  qualifier becomes true only for the $5$th child ``\texttt{a}''. A general translation of positional information in terms of counting operators \cite{DBLP:conf/doceng/GenevesR05,Balder09} is summarized on Figure~\ref{fig:positional-info}, where $\precedence$ denotes the document order (depth-first left-to-right) relation in a tree.  Note that translated path expressions can in turn be expressed into the core XPath fragment of Figure~\ref{fig:xpath-syntax} (at the cost of another exponential blow-up). Indeed, expressions like $\PathExpr/(\PathExpr_2 \pathexcept \PathExpr_3)/\PathExpr_4$ must be rewritten into expressions where binary connectives for paths occur only at top level, as in:
\begin{align*}
& \PathExpr/\PathExpr_2/\PathExpr_4 \pathexcept \\ & \PathExpr/\PathExpr_3/\PathExpr_4
\end{align*}

\begin{figure}
\begin{align*}
\PathExpr[\qualifposition=1] \equiv &  \PathExpr \pathexcept (\PathExpr/\precedence) \\
\PathExpr[\qualifposition=k+1] \equiv &  (\PathExpr \pathintersect \\ & (\PathExpr[k]/\!\precedence))[\qualifposition\!=\!1] \\
\precedence \equiv&  (\text{descendant::*})  \pathunion (\text{a-o-s::*}/ \\ &  \text{following-sibling::*}/\text{d-or-s::*}) \\
\text{a-or-s::*}   \equiv & \text{ancestor::*} \pathunion \text{self::*}\\
\text{d-or-s::*} \equiv & \text{descendant::*} \pathunion \text{self::*}
\end{align*}
   \caption{Positional Information as Syntactic Sugars  \cite{DBLP:conf/doceng/GenevesR05,Balder09}}\label{fig:positional-info}
\end{figure}

We focus on Core XPath expressions involving the counting operator  (see Figure~\ref{fig:xpath-syntax}). The XPath fragment without the counting operator (the navigational fragment) was already linearly translated into $\mu$-calculus in  \cite{geneves-pldi07}. The contributions presented in this paper allow to equip this navigational fragment with counting features such as the ones formulated above. Logical formulas capture the aforementioned XPath counting constraints.
For example, consider the following XPath expression:
\begin{verbatim}
child::a[count(descendant::b[parent::c])>5]
\end{verbatim}
This expression selects the children nodes named ``\texttt{a}'' provided they have more than $5$
descendants which (1) are named ``\texttt{b}'' and (2) whose parent is named ``\texttt{c}''. The logical formula denoting the set of children nodes
named ``\texttt{a}'' is:
\[\psi = a\wedge \modalf{\invns^*,\invfc}\top  \]
The  logical translation of the above XPath expression is:
\[\psi\wedge\modalf{\fc}{\countf{}{(\fc|\ns)^\star}{(b \wedge \ufixpf{\var}{\modalf{\invfc}{c}\vee \modalf{\invns}{\var}} )}{>}{5}}\]
This formula holds for nodes selected by the XPath expression. A correspondence between the main XPath axes over unranked trees and modal formulas over binary trees is given in Figure~\ref{fig:axes-modalities}. In this figure, each logical formula holds for nodes selected by the corresponding XPath axis from a context $\gamma$.
\begin{figure}
  \begin{center}
$\begin{array}{r|l}
  \text{Path} & \text{Logical formula}\\
  \hline
 \gamma/\text{self::*}  &  \gamma  \\
 \gamma/\text{child::*} &  \modalf{\invns^*, \invfc}\gamma  \\
 \gamma/\text{parent::*} &  \modalf{\fc}{\modalf{\ns^*}{\gamma}} \\
 \gamma/\text{descendant::*} &  \modalf{(\invns\mid\invfc)^*, \invfc}\gamma\\
 \gamma/\text{ancestor::*}  &  \modalf{\fc}{\modalf{(\fc\mid\ns)^*}{\gamma}}\\
 \gamma/\text{following-sibling::*} &  \modalf{\invns}{\modalf{\invns^*}{\gamma}}\\
 \gamma/\text{preceding-sibling::*} & \modalf{\ns}{\modalf{\ns^*}{\gamma}} \\
\end{array}$
\end{center}
\caption{XPath axes as modalities over binary trees.}\label{fig:axes-modalities}
\end{figure}

Let consider another example (XPath expression $e_1$):
\begin{verbatim}
child::a/child::b[count(child::e/descendant::h)>3]
\end{verbatim}
Starting from a given context in a tree, this XPath expression navigates to children nodes named ``a'' and selects their children named ``b''. Finally, it retains only those ``b'' nodes for which the qualifier between brackets holds.
The first path can be translated in the logic as follows:
$$ \vartheta = b \wedge \ufixpf{\var}{\modalf{\invfc}{(a \wedge \ufixpf{\var^\prime}{\modalf{\invfc}{\top} \vee \modalf{\invns}{\var^\prime}}}) \vee \modalf{\invns}{\var}}$$

The counting part requires a more sophisticated translation in the logic. This is because it makes implicit that ``e'' nodes (whose existence is simply tested for counting purposes) must be children of selected ``b'' nodes. The translation of the full aforementioned XPath expression is as follows:
$$\vartheta \wedge \nominal \wedge \countf{}{(\invfc\mid\invns)^*, (\fc\mid\ns)^*}{\eta}{>}{3}$$  
where $\nominal$ is a new fresh nominal used to mark a ``b'' node which is filtered by the qualifier and the formula $\eta$ describes the counted ``h'' nodes:
$$\eta = h \wedge \ufixpf{\var}{\modalf{\invfc}{(e \wedge \ufixpf{\var^\prime}{\modalf{\invfc}{\nominal} \vee \modalf{\invns}{\var^\prime}}}) \vee \modalf{\invns}{\var} \vee \modalf{\invfc}{\var}}$$
Intuitively, the general idea behind the translation is to first translate the leading path, use a fresh nominal for marking a node which is filtered, then find at least ``3'' instances of ``h'' nodes from which we can reach back the marked node via the inverse path of the counting formula. 

Since trails make it possible to navigate but not to test properties (like existence of labels), we test for labels in the counted formula $\eta$ and we use a general navigation $(\invfc\mid\invns)^*, (\fc\mid\ns)^*$ to look for counted nodes everywhere in the tree. Introducing the nominal is necessary to bind the context properly (without loss of information). Indeed, the XPath expression $e_1$ makes implicit that a ``e'' node must be a child of a ``b'' node selected by the outer path. Using a nominal, we restore this property by connecting the counted nodes to the initial single context node.

\begin{lemma}
The translation of Core XPath expressions with counting constraints into the logic is linear.
\label{xpathtranslationlemma}
\end{lemma}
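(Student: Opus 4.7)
The plan is to define the translation function by structural induction on the grammar of Core XPath expressions in Figure~\ref{fig:xpath-syntax} and to show, case by case, that each constructor contributes only a constant amount of new logical syntax on top of the translations of its immediate subexpressions. From this, linearity follows by a straightforward size argument.

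The base case is a single $\Step$ of the form $\axis{a}\text{::}t$. Figure~\ref{fig:axes-modalities} exhibits, for every XPath axis $a$, a fixed modal pattern in which only the context formula $\gamma$ varies; a name test $t$ contributes at most one atomic proposition (or $\true$ for $*$). Hence each step translates to a formula whose size is bounded by a constant. For composition $\PathExpr_1/\PathExpr_2$, union, intersection, and exception at the top level, I plan to inductively translate each operand and combine them with a constant-size pattern following the semantics of Figure~\ref{fig:xpath-semantics}; for a qualifier $[\Qualifier]$, the translation of $\Qualifier$ is taken inductively and conjoined, again at constant overhead. Absolute paths $/\PathExpr$ add a constant ancestor-free test such as $\neg\modalf{\invfc}\true \wedge \neg\modalf{\invns}\true$. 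Boolean combinations inside qualifiers map directly to $\wedge$, $\vee$, $\neg$, and nominals are already constant-size in the logic. Summing over the syntax tree gives a formula whose size is $O(|e|)$.

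The delicate case is $\qualifcount{\PathExpr'}~\Comparison~k$. I would follow the schema illustrated in the example translation of $e_1$: introduce a fresh nominal $\nominal$ to mark the current context, and produce a counting subformula $\countf{}{(\invfc|\invns)^\star,(\fc|\ns)^\star}{\eta}{\#}{k}$ whose trail is fixed and whose counted body $\eta$ is obtained from $\PathExpr'$ by reversing its step sequence and anchoring at $\nominal$. Because each XPath axis has an inverse axis whose logical translation in Figure~\ref{fig:axes-modalities} is again of constant size, reversal of the sequence of steps yields a formula of the same asymptotic size as the forward translation; inner qualifiers of $\PathExpr'$ translate inductively (recall that $\Qualifier'$ syntactically forbids nested $\CountExpr$, so the $\restrictedform$ restriction on counting bodies is respected). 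The comparison $\Comparison$ expands to at most two counting operators via the equality encoding of Section~\ref{subsec:syntax}, still a constant-factor overhead. The whole $\CountExpr$ block therefore costs $O(|\PathExpr'|)$.

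The main obstacle is to make the reversal and nominal-binding machinery of the counting case precise enough to be checked mechanically: one must argue that for any path $\PathExpr'$ the ``inverse translation'' $\eta$ exists, has size linear in $|\PathExpr'|$, and correctly characterises the nodes $n'$ such that $(\gamma,n') \in \pathsem{\PathExpr'}$ by reachability back to the nominal. I plan to handle this by defining, in parallel with the forward translation, a second syntax-directed translation that walks $\PathExpr'$ from right to left, replacing each $\Step$ with its inverse step formula from Figure~\ref{fig:axes-modalities} and threading qualifier translations unchanged. A simple induction on $\PathExpr'$ then gives both the correctness of the characterisation and a size bound of the form $c\cdot|\PathExpr'|$ for some constant $c$, after which summing the contributions of all cases completes the proof.
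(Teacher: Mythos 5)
Your proposal follows essentially the same route as the paper: a structural induction in the style of the non-counting translation of \cite{geneves-pldi07}, with the counting case handled by a fresh nominal, the fixed global trail $(\invfc|\invns)^\star,(\fc|\ns)^\star$, and a reversed (inverse-axis) translation of the counted path anchored at the nominal --- exactly the mechanism the paper credits for avoiding trail duplication and keeping the translation linear. The paper's own proof is only a two-sentence sketch, so your more detailed case analysis is a faithful elaboration of the same argument rather than a different one.
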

It is proven by structural induction in a similar manner to \cite{geneves-pldi07} (in which the translation is proven for expressions without counting constraints). For counting formulas, the use of nominals and the general (constant-size) counting trail make it possible to avoid duplication of trails so that the translation remains linear.

We can now address several decision problems such as equivalence, containment, and emptiness
of XPath expressions. These decision problems are reduced to
test satisfiability for the logic (in the manner of \cite{geneves-pldi07}). We present in Section~\ref{sec:algo} a satisfiability testing algorithm with a single
exponential complexity with respect to the formula size.

In \cite{geneves-pldi07}, it was show the logic is also able to capture XML schema languages.
This allows to test the XPath decision problems in the presence of XML types.
We now show our logic can also succinctly express cardinality constraints on XML types.

\subsection{Regular Tree Languages with Cardinality Constraints}

Regular tree grammars capture most of the schemas in use today \cite{DBLP:journals/toit/MurataLMK05}. The logic can express all regular tree languages (it is easy to prove that regular expression types in the manner of e.g., \cite{DBLP:journals/toplas/HosoyaVP05}  can be linearly translated into the logic: see \cite{geneves-pldi07}). 

In practice, schema languages often provide shorthands for expressing cardinality constraints on node occurrences. XML Schema notably offers two attributes {\em minOccurs} and {\em maxOccurs} for this purpose. For instance, the following XML schema definition:
\begin{Verbatim}[fontsize=\small]
<xsd:element name="a">
  <xsd:complexType>
    <xsd:sequence>
     <xsd:element name="b" minOccurs="4" maxOccurs="9"/>
    </xsd:sequence>
  </xsd:complexType>
</xsd:element>
\end{Verbatim} 
\noindent
is a notation that restricts the number of occurrences of ``\texttt{b}'' nodes to be at least 4 and at most 9, as children of ``\texttt{a}'' nodes. The goal here is to have a succinct notation for expressing regular languages which could otherwise be exponentially large if written with usual regular expression operators. The above regular requirement can be translated as the formula:
\[\phi \wedge \modalf{\fc}{(\countf{}{\ns^\star}{b}{>}{3}} \wedge \countf{}{\ns^\star}{b}{\leq}{9})\]
where $\phi$ corresponds to the regular tree type $a[b^*]$ as follows:
\[\begin{array}{ll}
\phi =&(a \wedge (\neg\modalf{\fc}{\top} \vee \modalf{\fc}{\psi})) \wedge \neg\modalf{\ns}{\top}  \vspace{0.1cm}\\
\psi =&\mu x. \left(b \wedge \neg\modalf{\fc}{\top} \wedge \neg\modalf{\ns}{\top}\right)  \vee \left(b \wedge \neg\modalf{\fc}{\top} \wedge \modalf{\ns}{x}\right)
\end{array}\]


This example only involves counting over children nodes. The logic allows counting through more general trails, and in particular arbitrarily deep trails. 
Trails corresponding to the XPath axes ``preceding, ancestor, following'' can be used to constrain the context of a schema. 
The ``descendant'' trail can be used to specify additional constraints over the subtree defined by a given schema. For instance, suppose we want to forbid webpages containing nested anchors ``$a$'' (whose interpretation makes no sense for web browsers). We can build the logical formula $f$ which is the conjunction of a considered schema for webpages (e.g. XHTML) with the formula $a/\text{descendant::}a$ in XPath notation. Nested anchors are forbidden by the considered schema iff $f$ is unsatisfiable.

As another example, suppose we want paragraph nodes (``$p$'' nodes) not to be nested inside more than 3 unordered lists (``$ul$'' nodes), regardless of the schema defining the context. One may check for the unsatisfiability of the following formula:
\[ p\wedge \countf{}{(\invfc|\invns)^\star,\invfc}{ul}{>}{3} \]


\section{Satisfiability Algorithm} \label{sec:algo}
We present a tableau-based algorithm for checking satisfiability of formulas. Given a formula, the algorithm seeks to build a tree containing a node selected by the formula. We show that our algorithm is correct and complete: a satisfying tree is found if and only if the formula is satisfiable. We also show that the time complexity of our algorithm is exponential in the size of the formula.

\subsection{Overview}
The algorithm operates in two stages. 

First, a formula $\form$ is decomposed into a set of subformulas, called the \emph{lean}. The lean gathers all subformulas that are useful for determining the truth status of the initial formula, while eliminating redundancies. For instance, conjunctions and disjunctions are eliminated at this stage. More precisely, the lean (defined in \ref{sec:lean}) mainly gathers atomic propositions and modal subformulas. From the lean, one may gather a finite number of formulas, called a $\form\!-\!$node, which may be satisfied at a given node of a tree. Trees of $\form\!-\!$nodes represent the exhaustive search universe in which the algorithm is looking for a satisfying tree.

The second stage of the algorithm consists in the building of sets of such trees in a bottom-up manner, ensuring consistency at each step. Initially, all possible leaves (i.e., $\form\!-\!$node that do not require children nodes) are considered. During further steps, the algorithm considers every possible $\form\!-\!$node that can be connected with a tree of the previous steps, checking for consistency. For instance, if a formula at a $\form\!-\!$node $n$ involve a forward modality  $\modalf{\fc}{\form'}$, then $\form'$ must be verified at the first child of $n$. Reciprocally, due to converse modalities, a $\form\!-\!$node may impose restrictions on its possible parent nodes. The new trees that are built may involve converse modalities, which will be satisfied during further steps of the algorithm. To ensure the algorithm terminates, a bound on the number of times each $\form\!-\!$node may occur in the tree is given.

Finally, the algorithm terminates whenever:
\begin{itemize}
  \item either a tree that satisfies the initial formula has been found, and its root does not contain any pending (unproven) backward modality; or
  \item every tree has been considered (the exploration of the whole search universe is complete): the formula is unsatisfiable.
\end{itemize}


\subsection{Preliminaries}
To track where counting formulas are satisfied, we annotate each one with a fresh \emph{counting proposition} $c$, yielding formulas of the form $\countf{c}{\trail}{\form}{\#}{\natn}$.
To define the notions of lean and $\form\!-\!$nodes, we need to extract navigating formulas from counting formulas (Figure~\ref{fig-nav}).
\begin{figure}
\begin{align*}
 nav(\var) &= \var &
 nav(\prop) &= \prop \\
 nav(\top) &= \top &
 nav(c) &= c\\
 nav(\neg \prop) &= \neg \prop &
 nav(\neg \modalf{\modals}{\top})&= \neg \modalf{\modals}{\top}
\end{align*}
\begin{align*}
 nav(\form_1\wedge \form_2)&= nav(\form_1) \wedge nav(\form_2) \\
 nav(\form_1\vee \form_2)&= nav(\form_1) \vee nav(\form_2) \\
 nav(\modalf{\modals}{\form}) &= \modalf{\modals}{nav(\form)} \\
 nav(\ufixpf{\var}{\restrictedform}) &= \ufixpf{\var}{nav(\restrictedform)}\\
 nav(\countf{c}{\trail}{\restrictedform}{>}{\natn})&= \nav{\trail}{\restrictedform\wedge c}\\
 nav(\countf{c}{\trail}{\restrictedform}{\leq}{\natn})&= \nav{\trail}{(\restrictedform \land c) \lor (\neg\restrictedform \land \neg c)}\\
\nav{\epsilon}{\restrictedform} &= \restrictedform \\
 \nav{\modals}{\restrictedform} &= \modalf{\modals}{\restrictedform}\\
 \nav{\trail_1,\trail_2}{\restrictedform}&= \nav{\trail_1}{\nav{\trail_2}{\restrictedform}} \\
 \nav{\trail_1\mid\trail_2}{\restrictedform} &= \nav{\trail_1}{\restrictedform} \vee \nav{\trail_2}{\restrictedform}\\
 \nav{\trail^\star}{\restrictedform} &= \ufixpf{\var}{nav(\restrictedform)\vee\nav{\trail}{\var}}
\end{align*}
\caption{Navigation extraction from counting formulas}
\label{fig-nav}
\end{figure}

We now define the {\em Fisher-Ladner} relation to extract subformulas. In the following, $i$ ranges over $1$ and $2$.
\begin{align*}
& \fishrel{\form_1\wedge\form_2}{\form_i}, && \fishrel{\form_1\vee\form_2}{\form_i}, \\
& \fishrel{\ufixpf{\var}{\form}}{\form[\subst{\ufixpf{\var}{\form}}{\var}]}, && \fishrel{\countf{c}{\trail}{\restrictedform}{\#}{\natn}}{nav(\countf{c}{\trail}{\restrictedform}{\#}{\natn})},\\
&  \fishrel{\modalf{\modals}\phi}{\phi}.
\end{align*}

The {\em Fisher-Ladner} closure of a formula $\form$, written $\flcl{\form}$, is the set defined as follow.
\begin{align*}
& \flcl{\form}_0 &&=&& \{\form\}, \\
& \flcl{\form}_{i+1} &&=&& \flcl{\form}_i \cup \{\form^\prime \mid \fishrel{\form^{\prime\prime}}{\form^\prime}, \form^{\prime\prime}\in \flcl{\form}_i\},\\
& \flcl{\form}&&=&&\flcl{\form}_k,
\end{align*}
\text{where $k$ is the smallest integer s.t. $\flcl{\form}_k=\flcl{\form}_{k+1}$}.
\noindent
Note that this set is finite since only one expansion of a fixpoint formula is required in order to produce all its subformulas in the closure.

\label{sec:lean}
The {\em lean} of a formula $\form$ is a set of formulas containing navigating formulas of the form $\modalf{\modals}{\top}$, every navigating formulas of the form $\modalf{\modals}{\restrictedform}$ (i.e., that do not contain counting formulas) from $\flcl{\form}$, every proposition occurring in $\form$, written $\Props_{\form}$, every counting proposition, written $C$, and an extra proposition that does not occur in $\form$ used to represent other names, written $p_{\overline\form}$.

\begin{equation*}
\lean{\form}=\{\modalf{\modals}{\top}\}\cup \{\modalf{\modals}{\restrictedform}\in \flcl{\form}\} \cup \Props_{\form} \cup C \cup \{p_{\overline\form}\}
\end{equation*}

A {\em $\form\!-\!$node }, written $\fnode{\form}$, is a subset of $\lean{\form}$, such that:
\begin{itemize}
\item exactly one proposition from $\Props_{\form} \cup \{p_{\overline\form}\}$ is present;
\item when $\modalf{\modals}{\restrictedform}$ is present, then $\modalf{\modals}{\true}$ is present; and
\item both $\modalf{\invfc}{\true}$ and $\modalf{\invns}{\true}$ cannot be present at the same time.
\end{itemize}
The set of $\form\!-\!$nodes is defined as $\fNodes{\form}$.

Intuitively, a node $\fnode{\form}$ corresponds to a formula.
\[
\fnode{\form} = \bigwedge_{\psi \in \fnode{\form}} \psi \wedge \bigwedge_{\psi \in \lean{\form} \setminus \fnode{\form}} \neg \psi
\]

When the formula $\form$ under consideration is fixed, we often omit the superscript.

A \emph{\psitree{\form}} is either the empty tree $\emptyset$, or a triple $(\fnode{\form}, \stree_1,\stree_2)$ where $\stree_1$ and $\stree_2$ are \psitree{\form}s. When clear from the context, we usually refer to \psitree{\form}s simply as trees.

\begin{figure}
  \begin{mathpar}
    \inferrule*{ }{\node \vdash^{\form} \top} \and
    \inferrule*{\restrictedform\in\node}{\node\vdash^{\form} \restrictedform} \and
    \inferrule*{\restrictedform \not\in \node}{\node\vdash^{\form} \neg \restrictedform} \and
    \inferrule*{\node\vdash^{\form} \restrictedform_1 \\ \node\vdash^{\form} \restrictedform_2}{\node\vdash^{\form} \restrictedform_1\wedge\restrictedform_2} \and
    \inferrule*{\node\vdash^{\form} \restrictedform_1}{\node\vdash^{\form} \restrictedform_1\vee\restrictedform_2} \and
    \inferrule*{\node\vdash^{\form} \restrictedform_2}{\node\vdash^{\form} \restrictedform_1\vee\restrictedform_2} \and
    \inferrule*{\node\vdash^{\form} \restrictedform\{\subst{\ufixpf{\var}{\restrictedform}}{\var}\}}{\node\vdash^{\form} \ufixpf{\var}{\restrictedform}}
  \end{mathpar}
  \caption{Local entailment relation: between nodes and formulas}
  \label{fig:entailmentnode}
\end{figure}

We now turn to the definition of consistency of a \psitree{\form}. To this end, we define an entailment relation between a node and a formula in Figure~\ref{fig:entailmentnode}.

  Two nodes $\node_1$ and $\node_2$ are consistent under modality $\modals \in \{\fc,\ns\}$, written $\fbrel{\form}{\node_1}{\modals}{\node_2}$, iff
  \begin{align*}
    \forall \modalf{\modals}\psi \in \lean{\form}&,\modalf{\modals}\psi \in \node_1 \iff \node_2 \vdash^{\form} \psi\\
    \forall \modalf{\dual\modals}\psi \in \lean{\form}&, \modalf{\dual\modals}\psi \in \node_2 \iff \node_1 \vdash^{\form} \psi
  \end{align*}

Consistency is checked each time a node is added to the tree, ensuring that forward modalities of the node are indeed satisfied by the nodes below, and that pending backward modalities of the node below are consistent with the added node. Note that counting formulas are not considered at this point, as they are globally verified in the next step.


Upon generation of a finished tree, i.e., a tree with no pending backward modality, one may check whether a node of this tree satisfies $\form$. To this end, we first define forward navigation in a \psitree{\form} $\stree$. Given a path consisting of forward modalities $\pathn$, $\stree(\pathn)$ is the node at that path. It is undefined if there is no such node.
\begin{align*}
  (\node,\stree_1,\stree_2)(\epsilon) &= \node \\
  (\node,\stree_1,\stree_2)(\fc\pathn) &= \stree_1(\pathn) \\
  (\node,\stree_1,\stree_2)(\ns\pathn) &= \stree_2(\pathn)
\end{align*}
We also allow extending the path with backward modalities if they match the last modality of the path.
\begin{align*}
  (\node,\stree_1,\stree_2)(\pathn \fc \invfc) &= (\node,\stree_1,\stree_2)(\pathn) \\
  (\node,\stree_1,\stree_2)(\pathn \ns \invns) &= (\node,\stree_1,\stree_2)(\pathn)
\end{align*}

Now, we are able to define an entailment relation along paths in
\psitree{\form}s in Figure~\ref{fig:countingentailment}. This relation extends
local entailment relation (Figure~\ref{fig:entailmentnode}) with checks for
counting formulas. Note that the case for fixpoints is contained in the case for formulas with no counting subformula. In the ``less than'' case, we need to make sure that every node reachable through the trail is taken into account, either as counted if it satisfies $\restrictedform$, or not counted otherwise  (in this case, $\neg \restrictedform$ denotes the negation normal form). 

\begin{figure}
  \begin{mathpar}
   \inferrule*{\form^\prime \text{ does not contain counting formulas} \\ \stree(\pathn) \vdash^\form \form^\prime}{\pathn \vdash^{\form}_{\stree} \form^\prime}\and
    \inferrule*{\pathn\vdash^{\form}_{\stree}  \form_1 \\ \pathn\vdash^{\form}_{\stree}  \form_2}{\pathn\vdash^{\form}_{\stree}  \form_1\wedge\form_2} \and
    \inferrule*{\pathn\vdash^{\form}_{\stree}  \form_1}{\pathn\vdash^{\form}_{\stree}  \form_1\vee\form_2} \and
    \inferrule*{\pathn\vdash^{\form}_{\stree}  \form_2}{\pathn\vdash^{\form}_{\stree}  \form_1\vee\form_2} \and
    \inferrule*{\pathn \modals \vdash^{\form}_{\stree}\form^\prime}{\pathn \vdash^{\form}_{\stree} \modalf{\modals}\form^\prime}\and
    \inferrule*{
     | \{ \node^\prime,\; \pathn^\prime \in \trail \wedge \stree(\pathn \pathn^\prime) = \node^\prime \wedge \node^\prime \vdash^{\form} \restrictedform \land c \} | > \natn
}{ \pathn \vdash^{\form}_{\stree} \countf{c}{\trail}{\restrictedform}{>}{\natn}}
\and
    \inferrule*{
     | \{ \node^\prime,\; \pathn^\prime \in \trail \wedge \stree(\pathn \pathn^\prime) = \node^\prime \wedge \node^\prime \vdash^{\form} \restrictedform \land c \} | \leq \natn \\
     \forall \pathn^\prime \in \trail, 
     \stree(\pathn\pathn^\prime) \vdash^\form (\restrictedform \land c) \lor (\neg \restrictedform \land \neg c)
}{ \pathn \vdash^{\form}_{\stree} \countf{c}{\trail}{\restrictedform}{\leq}{\natn}}   \end{mathpar}
  \caption{Global entailment relation (incl. counting formulas)}
  \label{fig:countingentailment}
\end{figure}

We conclude these preliminaries by introducing some final notations.
The {\em root} of a \psitree{\form} is defined as follows. 
\begin{align*}
\stroot{\emptyset} & = \emptyset\\
\stroot{(\node,\stree_1,\stree_2)} & = \node
\end{align*}

A \psitree{\form} $\stree$ {\em satisfies} a formula $\form$, written $\stree\vdash \form$, if neither $\modalf{\invfc}\top$ nor $\modalf{\invns}\top$ occur in $\stroot{\stree}$, and if there is a path $\pathn$ such that $\pathn\vdash^{\form}_{\stree} \form$. A set of trees $ST$ {\em satisfies} a formula $\form$, written $ST\vdash \form$, when there is a tree $\stree\in ST$ such that
$\stree\vdash \form$.

\subsection{The Algorithm}
We are now ready to present the algorithm, which is parameterized by $K(\form)$ (defined in Figure~\ref{fig:boundK}), the maximum number of occurrences of a given node in a path from the root of the tree to a leaf. The algorithm builds consistent candidate trees from the bottom up, and checks at each step if one of the built tree satisfies the formula, returning $1$ if it is the case. As the set of nodes from which to build the trees is finite, it eventually stops and returns $0$ if no satisfying tree has been found.

\begin{algorithm}
\caption{Check Satisfiability of $\form$} 
\label{satalgo}
\begin{algorithmic}
    \STATE $ST \leftarrow \emptyset$
    \REPEAT
        \STATE $AUX \leftarrow \{(\node,\stree_1,\stree_2) \mid$ ~~\quad\qquad \textcolor{gray}{\COMMENT{we extend the trees}}\\
        \quad $\nmax(\node,\stree_1,\stree_2) \leq K(\form) + 2$ ~\textcolor{gray}{\COMMENT{with an available node}}\\
        \quad for $i$ in $\fc,\ns$                               ~~~\qquad\qquad\qquad \textcolor{gray}{\COMMENT{and each child is either}}\\
        \quad $\stree_i = \emptyset$ and $\modalf{i}{\top} \notin \node$ \qquad\qquad\textcolor{gray}{\COMMENT{an empty tree}}\\
        \quad or $\stree_i\in ST$ ~~\quad\qquad\qquad\qquad\textcolor{gray}{\COMMENT{or a previously built tree}}\\
        \quad\quad $\modalf{\dual{i}}\top \in \stroot{\stree_i}$ ~~\textcolor{gray}{\COMMENT{with pending backward modalities}}\\
        \quad\quad $\fbrel{\form}{\node}{i}{\stroot{\stree_i}}\}$ \qquad\textcolor{gray}{\COMMENT{checking consistency}}
        \IF{$AUX \subseteq ST$}
      \RETURN{$0$} \qquad\qquad\qquad\qquad\textcolor{gray}{\COMMENT{No new tree was built}}
        \ENDIF
      \STATE $ST \leftarrow ST \cup AUX$
    \UNTIL{$ST \vdash \form$}
    \RETURN{$1$}
\end{algorithmic}
\end{algorithm}

To bound the size of the trees that are built, we restrict the number of identical nodes on a path from the root to any leaf by $K(\form) + 2$, defined in Figure \ref{fig:boundK}, using $\nmax$ defined as follows.
\begin{align*}
  \nmax(\node,\stree_1,\stree_2) &= \imax(\nmax(\node,\stree_1),\nmax(\node,\stree_2))\\
  \nmax(\node,(\node,\stree_1,\stree_2)) & = 1 + \nmax(n,\stree_1,\stree_2)\\
  \nmax(\node,(\node^\prime,\stree_1,\stree_2)) & = \nmax(n,\stree_1,\stree_2) \quad\text{if $\node \neq \node^\prime$}\\
  \nmax(\node, \emptyset) &= 0
\end{align*}

\begin{figure}
\begin{align*}
& K(\prop) = K(\neg\prop) = K(\neg \modalf{\modals}{\true}) = K(\true) = K(\ufixpf{\var}{\restrictedform})=0 \\
& K(\form_1\wedge\form_2) = K(\form_1\vee\form_2)= K(\form_1)+K(\form_2) \\
& K(\modalf{\modals}{\form}) = K(\form) \\
& K(\countf{}{\trail}{\restrictedform}{\#}{\natn}) = \natn+1
\end{align*}
\caption{Occurrences bound}
\label{fig:boundK}
\end{figure}

Consider for instance the formula $\form = \prop_1\wedge \modalf{\fc}{\countf{}{\ns^\star}{\prop_2}{>}{2}}$. The computed lean is as follows, where $\psi = \ufixpf{\var}{(\prop_2 \land c) \vee \modalf{\ns}{\var}} $.
\[\{\prop_1,\prop_2,\prop_3,c,\modalf{\fc}{\top},\modalf{\ns}{\top},\modalf{\invfc}{\top}, \modalf{\invns}{\top}, \modalf{\fc}{\psi}, \modalf{\ns}{\psi} \}\]

Names other than $\prop_1$ and $\prop_2$ are represented by $\prop_3$; $c$ identifies counted nodes.
Computing the bound on nodes, we get $K(\form) = 3$.

After the first step, $ST$ consists of the trees $(\{\prop_i\}, \emptyset, \emptyset)$, $(\{\prop_i,c\}, \emptyset, \emptyset)$, $(\{\prop_i, \modalf{\dual{j}}{\top}\}, \emptyset, \emptyset)$, and $(\{\prop_i, c,  \modalf{\dual{j}}{\top}\}, \emptyset, \emptyset)$ with $i \in \{1,2,3\}$ and $j \in \{\fc,\ns\}$. At this point the three finished trees in $ST$ are tested and found not to satisfy $\form$.

After the second iteration many trees are created, but the one of interest is the following.
\[\tree_0 = (\{\prop_2,c,\modalf{\ns}{\top},\modalf{\invns}{\top}, \modalf{\ns}{\psi}\},\emptyset,(\{\prop_2,c,\modalf{\invns}{\top}\},\emptyset,\emptyset))\]

The third iteration yields the following tree.
\[\tree_1 = (\{\prop_2,c,\modalf{\ns}{\top},\modalf{\invfc}{\top}, \modalf{\ns}{\psi}\},\emptyset,\tree_0)\]

We can conclude by the fourth iteration when we find the tree $(\{\prop_1, \modalf{\fc}{\psi}, \modalf{\fc}{\top}\},\tree_1,\emptyset)$, which is found to satisfy $\form$ at path $\epsilon$. 
As the nodes at every step are different, the limit is not reached. 
Figure \ref{algfig} depicts a graphical representation of the example
where counted nodes (containing $c$) are drawn as thick circles.

\begin{figure}
\centering
\begin{tikzpicture}[scale=0.6]. 
\draw [thin,dotted] (-1,3) -- (9,3);
\draw [thin,dotted] (-1,5) -- (9,5);
\draw [thin,dotted] (-1,7) -- (9,7);
\node (p1) at (0,2) [circle,draw] {$\prop_1$};
\node (p2) at (2,2) [circle,draw] {$\prop_2$};
\node (p3) at (4,2) [circle,draw] {$\prop_3$};
\node (p10) at (5,2)  {$\ldots$};
\node (p5) at (7,2) [circle,draw,very thick] {$\prop_2$};
\node (p6) at (6,2)  {$\ldots$};
\node (p7) at (6,4) [circle,draw,very thick] {$\prop_2$};
\node (p8) at (5,6) [circle,draw,very thick] {$\prop_2$};
\node (p9) at (6,8) [circle,draw] {$\prop_1$};
\draw [->] (p9) to node[auto] {$\fc$} (p8);
\draw [->] (p8) to node[auto] {$\ns$} (p7);
\draw [->] (p7) to node[auto] {$\ns$} (p5);
\end{tikzpicture}
\caption{Checking  $\form = \prop_1\wedge \modalf{\fc}{\countf{}{\ns^\star}{\prop_2}{>}{2}}$} \label{algfig}
\end{figure}

\subsection{Termination}
Proving termination of the algorithm is straightforward, as only a finite number of trees may be built and the algorithm stops as soon as it cannot build a new tree.

\subsection{Soundness}

If the algorithm terminates with a candidate, we show that the initial formula is satisfiable. Let $\stree$ and $\pathn$ be the \psitree{\form} and path such that $\pathn \vdash^\form_\stree \form$. We build a tree from $\stree$ and show that the interpretation of $\form$ for this tree includes the node at path $\pathn$.

We write $\tree(\stree)$ for the tree $(\Nodes,\brel,\tlabel)$ defined as follows. We first rewrite $\Gamma$ such that each node $\node$ is replaced by the path to reach it (i.e, nodes are identified by their path).
\begin{align*}
  path(\node,\stree_1,\stree_2) & \rightarrow (\epsilon, path(\fc,\stree_1), path(\ns,\stree_2))\\
  path(\pathn, (\node,\stree_1,\stree_2)) & \rightarrow (\pathn, path(\pathn \fc,\stree_1), path(\pathn \ns,\stree_2))\\
  path(\pathn, \emptyset) & \rightarrow \emptyset
\end{align*}

We then define:
\begin{itemize}
\item $\Nodes = \stnodes{path(\stree)}$;
\item for every $(\pathn,\stree_1,\stree_2)$ in $path(\stree)$ and $i=\fc,\ns$,
 if $\stree_i\neq\emptyset$ then $R(\pathn,i)=\pathn i$ and $R(\pathn i, \dual{i}) = \pathn$; and
\item for all $\pathn\in\Nodes$ if $\prop\in\stree(\pathn)$ then $\tlabel(\pathn)=\prop$.
\end{itemize}

\begin{lemma}\label{lem:localsound}
  Let $\restrictedform$ a subformula of $\form$ with no counting formula. If $\stree(\pathn) \vdash^\form \restrictedform$ then we have $\pathn \in \semf{\restrictedform}{\tree(\stree)}{\emptyset}$.
\end{lemma}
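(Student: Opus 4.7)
The plan is to prove the statement by structural induction on $\restrictedform$, using the definition of $\tree(\stree)$, the $\form$-node conditions from Section~4.2, and the consistency relation $\fbrel{\form}{\cdot}{\cdot}{\cdot}$. Since $\restrictedform$ contains no counting operator, the entailment relation involved is purely the local one from Figure~\ref{fig:entailmentnode}, so the rules to analyze are those for atoms, Booleans, and the single $\mu$-rule; occurrences of $\modalf{\modals}{\psi}$ are entailed only via the ``lookup'' rule $\restrictedform\in\node\Rightarrow \node\vdash^\form \restrictedform$.

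For the atomic cases, the argument is immediate: $\top$ is satisfied everywhere by definition of the semantics; $\prop$ follows because the $\form$-node condition forces a unique proposition in $\stree(\pathn)$, which is used to define $\tlabel(\pathn)$; $\neg \prop$ follows similarly, using that whenever $\prop\notin\stree(\pathn)$ another proposition from $\Props_{\form}\cup\{p_{\overline{\form}}\}$ labels $\pathn$; and $\neg\modalf{\modals}{\top}$ holds because the construction of $\tree(\stree)$ adds an $m$-edge exactly when $\modalf{\modals}{\top}\in\stree(\pathn)$. Conjunction and disjunction are discharged by the induction hypothesis. For the modality case $\modalf{\modals}{\psi}$, the entailment gives $\modalf{\modals}{\psi}\in\stree(\pathn)$, so by the $\form$-node condition $\modalf{\modals}{\top}\in\stree(\pathn)$ as well, guaranteeing that the neighbor $\stree(\pathn')$ exists in $\tree(\stree)$ (where $\pathn'$ extends $\pathn$ by $\modals$ for forward modalities, or is the parent for $\invfc,\invns$). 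The consistency clause $\fbrel{\form}{\cdot}{\cdot}{\cdot}$ then yields $\stree(\pathn')\vdash^{\form}\psi$, and the induction hypothesis applied to the strictly smaller subformula $\psi$ places $\pathn'$ in $\semf{\psi}{\tree(\stree)}{\emptyset}$, whence $\pathn\in\semf{\modalf{\modals}{\psi}}{\tree(\stree)}{\emptyset}$.

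The main obstacle is the fixpoint case $\ufixpf{\var}{\restrictedform'}$, because the entailment rule produces the premise $\stree(\pathn)\vdash^\form \restrictedform'\{\subst{\ufixpf{\var}{\restrictedform'}}{\var}\}$ whose formula is not a structural subformula, so pure structural induction breaks. I plan to handle this by a secondary induction on the height of the derivation of $\stree(\pathn)\vdash^{\form}\ufixpf{\var}{\restrictedform'}$, combined with the cycle-freeness assumption. Cycle-freeness ensures that $\var$ occurs only under a modality in $\restrictedform'$, so any productive traversal of the $\mu$-rule must eventually be followed by a modality step that moves to a different node of the finite tree $\tree(\stree)$. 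Hence derivations have bounded height in terms of $|\restrictedform'|$ and the height of $\tree(\stree)$, and the analysis amounts to showing that after $h$ unfoldings the path $\pathn$ belongs to the $h$-th Kleene approximant $\semf{(\restrictedform')^{(h)}}{\tree(\stree)}{\emptyset}$ of the least fixpoint (where $(\restrictedform')^{(0)}=\neg\top$), which is contained in $\semf{\ufixpf{\var}{\restrictedform'}}{\tree(\stree)}{\emptyset}$ by monotonicity and the Knaster–Tarski characterization. Lemma~\ref{lem:finite_unfolding} provides the semantic justification that these finite unfoldings suffice on the finite model $\tree(\stree)$, closing the case.
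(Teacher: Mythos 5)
Your treatment of the atomic, boolean, and modality cases matches the paper's proof, and your instinct that the fixpoint case is the only real difficulty is right. The problem is that the induction scheme you set up does not actually become well-founded once the fixpoint and modality cases are combined. You take structural induction as the primary scheme and, for $\ufixpf{\var}{\restrictedform'}$, switch to a secondary induction on the height of the derivation of $\stree(\pathn)\vdash^\form\ufixpf{\var}{\restrictedform'}$. But after one unfolding, every residual occurrence of $\ufixpf{\var}{\restrictedform'}$ sits under a modality (by guardedness), and your modality case discharges $\modalf{\modals}{\psi}$ by invoking consistency to obtain a \emph{fresh} judgement $\stree(\pathn\modals)\vdash^\form\psi$ at the neighbouring node --- a judgement whose derivation is not a subderivation of the original one, and where $\psi$ may literally be $\ufixpf{\var}{\restrictedform'}$ again (e.g.\ for $\restrictedform'=\prop\vee\modalf{\fc}{\var}$). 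So on the loop ``unfold, then cross a modality'' neither the formula size nor the local derivation height decreases, and the claim in your modality case that the induction hypothesis applies ``to the strictly smaller subformula $\psi$'' is exactly where the measure breaks.

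The ingredients you need are already present in your last few sentences, but they have to be promoted to the top-level measure rather than used as an afterthought. The paper's proof inducts on the lexicographic pair consisting of the number of unfoldings of $\restrictedform$ required for $\tree(\stree)$ in the sense of Lemma~\ref{lem:finite_unfolding}, followed by the size of the formula: the fixpoint case strictly decreases the first component because $\semf{\ufixpf{\var}{\restrictedform'}}{\tree(\stree)}{\emptyset}=\semf{\restrictedform'\{\subst{\ufixpf{\var}{\restrictedform'}}{\var}\}}{\tree(\stree)}{\emptyset}$, while the boolean and modality cases decrease the second without increasing the first. Your Kleene-approximant argument is essentially the same idea (the approximant index plays the role of the remaining-unfoldings count, and cycle-freeness is what makes it finite), so the proof can be repaired; but as written, the height of the derivation of the local judgement is not a quantity that decreases across modality steps, and the cross-node recursion is left without a terminating measure.
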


\begin{proof}
  We proceed by induction on the lexical ordering of the number of unfolding of $\restrictedform$ that are required for $\tree(\stree)$ as defined by Lemma~\ref{lem:finite_unfolding}, and of the size of the formula.
  
  The base cases are $\true$, atomic or counting propositions, and negated forms. These are immediate by definition of $\semf{\restrictedform}{\tree(\stree)}{\emptyset}$. The cases for disjunction and conjunction are immediate by induction (the formula is smaller). The case for fixpoints is also immediate by induction, as the number of unfoldings required decreases, and as $\semf{\ufixpf{\var}{\restrictedform}}{\tree(\stree)}{\emptyset} = \semf{\restrictedform\{\subst{\ufixpf{\var}{\restrictedform}}{\var}\}}{\tree(\stree)}{\emptyset}$.
  
  The last case is the presence of a modality $\modalf{\modals}\restrictedform$ from the $\form$node $\stree(\pathn)$. In this case we rely on the fact that the nodes $\stree(\pathn \modals)$ and $\stree(\pathn)$ are consistent to derive $\stree(\pathn \modals) \vdash^\form \restrictedform$. We then conclude by induction as the formula is smaller.
\end{proof}

\begin{theorem}[Soundness]\label{thm:soundness} If
$\pathn \vdash^\form_\stree \form$ then $\pathn \in \semf{\form}{\tree(\stree)}{\emptyset}$
\end{theorem}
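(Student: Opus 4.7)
My plan is to prove the theorem by induction on the structure of $\form$, where the inductive steps rely on a strengthened statement covering arbitrary paths and subformulas: for every subformula $\form'$ of the original formula and every path $\pathn'$, if $\pathn' \vdash^\form_\stree \form'$ then $\pathn' \in \semf{\form'}{\tree(\stree)}{\emptyset}$. Since the syntax forbids counting under fixpoints or under other counting operators, the body $\restrictedform$ appearing inside any counting formula is a $\restrictedform$-formula with no counting, and Lemma~\ref{lem:localsound} applies directly to it; this is what lets the induction go through.

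The propositional, fixpoint, and modality-free cases reduce to Lemma~\ref{lem:localsound} via the first rule of Figure~\ref{fig:countingentailment}. For disjunction, conjunction, and modalities, the inference rules of the global entailment relation mirror the clauses of the denotational semantics in Figure~\ref{formsem}, so each case is immediate by the induction hypothesis, using the definitions of $\tree(\stree)$ and the navigation $\stree(\pathn \modals)$ to relate modal steps in the tableau to edges of $\brel$ in the constructed tree.

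The interesting case is counting. For $\countf{c}{\trail}{\restrictedform}{>}{\natn}$, from the premise we obtain strictly more than $\natn$ pairs $(\node',\pathn')$ with $\pathn' \in \trail$, $\stree(\pathn \pathn') = \node'$ and $\node' \vdash^\form \restrictedform \land c$. Lemma~\ref{lem:localsound} gives $\pathn \pathn' \in \semf{\restrictedform}{\tree(\stree)}{\emptyset}$ for each of them, and the construction of $\tree(\stree)$ ensures $\path{\pathn}{\trail}{\pathn \pathn'}$ in $\brel$; the witnesses are distinct because their paths are, so the cardinality in the semantics exceeds $\natn$. For $\countf{c}{\trail}{\restrictedform}{\leq}{\natn}$, one must argue the converse: every node of $\tree(\stree)$ reachable from $\pathn$ through $\trail$ that satisfies $\restrictedform$ is among the at-most-$\natn$ counted ones. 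This is where the second premise of the $\leq$ rule, $\stree(\pathn\pathn') \vdash^\form (\restrictedform \land c) \lor (\neg\restrictedform \land \neg c)$ for every $\pathn' \in \trail$, is essential: combined with Lemma~\ref{lem:localsound} applied to both $\restrictedform$ and its negation normal form, it rules out any ``uncounted'' node that nonetheless satisfies $\restrictedform$ semantically.

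I expect the main obstacle to lie in that last step. The semantic counting ranges over $\semf{\restrictedform}{\tree(\stree)}{\emptyset}$, whereas the tableau counts nodes labelled with the bookkeeping proposition $c$; soundness of the $\leq$ case therefore hinges on Lemma~\ref{lem:localsound} being bidirectional enough to guarantee that $\neg\restrictedform$ at the tableau level implies $\pathn' \notin \semf{\restrictedform}{\tree(\stree)}{\emptyset}$. One must also carefully justify that trails $\pathn' \in \trail$ in the tableau coincide with the relational paths $\path{\pathn}{\trail}{\cdot}$ required by the semantics, which follows from the definition of $\tree(\stree)$ and the structural identity $\stree(\pathn \modals \dual{\modals}) = \stree(\pathn)$ built into the navigation, but deserves an explicit inductive verification on the trail expression $\trail$.
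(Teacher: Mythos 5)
Your proposal is correct and follows essentially the same route as the paper: induction on the entailment derivation, discharging the non-counting cases via Lemma~\ref{lem:localsound}, using the $\natn+1$ nodes satisfying $\restrictedform \land c$ for the ``greater than'' case, and using the second premise of the $\leq$ rule (every reachable node satisfies $(\restrictedform \land c) \lor (\neg\restrictedform \land \neg\cprop)$) to exclude uncounted nodes from the semantic count. The extra care you flag about matching tableau trails to relational paths in $\tree(\stree)$ is a detail the paper leaves implicit, but it does not change the argument.
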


\begin{proof}
  The proof proceeds by induction on the derivation of $\pathn \vdash^\form_\stree \form$. Most cases are immediate (or rely on Lemma \ref{lem:localsound}). For the ``greater than'' counting case, we rely on the $k+1$ selected nodes that have to satisfy $\restrictedform \land c$ thus $\restrictedform$. In addition, in the ``less than'' case, every node that is not counted has to satisfy $\neg \restrictedform \land \neg c$, so in particular $\neg \restrictedform$. In both cases we conclude by induction.
\end{proof}

\subsection{Completeness}

Our proof proceeds in two step. We build a \psitree{\form} that satisfies the
formula, then we show it is actually built by the algorithm. As the proof is quite complex, we devote some space to detail it.

Assume that formula $\form$ is satisfiable by a tree
$\tree$. We consider the smallest such tree (i.e., the tree with the fewest
number of nodes) and fix $\node^\star$, a node witnessing satisfiability.

We now build a \psitree{\form} homomorphic to $\tree$, called the lean labeled
version of $\form$, written $\stree(\tree,\form)$. To this end, we start by annotating counted nodes along with their corresponding counting proposition, yielding a new tree $\tree_c$. Starting from $\node^\star$ and by induction on $\form$, we proceed as follows. For formulas with no counting subformula, including recursion, we stop. For conjunction and disjunction of formulas, we recursively annotate according to both subformulas. For modalities, we recursively annotate from the node under the modality. For $\countf{c}{\trail}{\restrictedform}{\leq}{\natn}$, we annotate every selected node with the counting proposition corresponding to the formula. For $\countf{c}{\trail}{\restrictedform}{>}{\natn}$, we annotate exactly $\natn+1$ selected nodes.

We now extend the semantics of formulas to take into account counting propositions and annotated nodes, written $\semfc{\cdot}{\tree}{\valuation}$. The definition is identical to Figure \ref{formsem}, with one addition and two changes. The addition is for counting propositions, which we define as $\node \in \semfc{c}{\tree}{\valuation}$ iff $\node$ is annotated by $c$. The two changes are for counting propositions, which we define as follows, where we select only nodes that are annotated.

\begin{align*}
\semfc{\countf{}{\alpha}{\form^\prime}{\leq}{\natn}}{\tree}{\valuation} &=
    \{\node , |\{\node^\prime\in \semfc{\form^\prime}{\tree}{\valuation} \cap \semfc{c}{\tree}{\valuation} , \path{\node}{\alpha}{\node^\prime}\}|\leq  \natn \}\\
\semfc{\countf{}{\alpha}{\form^\prime}{>}{\natn}}{\tree}{\valuation} &=
    \{\node ,  |\{\node^\prime\in \semfc{\form^\prime}{\tree}{\valuation} \cap \semfc{c}{\tree}{\valuation} , \path{\node}{\alpha}{\node^\prime}\} |>  \natn \}
\end{align*}

We show that this modification of the semantics does no change the satisfiability of the formula.

\begin{lemma}\label{lem:annotated_satisfiability}
  We have $\node^\star \in \semfc{\form}{\tree}{\emptyset}$.
\end{lemma}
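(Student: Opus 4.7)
The plan is to prove, by structural induction on $\form$, a stronger statement: for every subformula $\form'$ of $\form$ and every node $\node$ at which the annotation process is invoked (starting from $\node^\star$ for the full formula) with $\node \in \semf{\form'}{\tree}{\emptyset}$, the resulting annotated tree $\tree_c$ satisfies $\node \in \semfc{\form'}{\tree_c}{\emptyset}$. The original claim is then obtained by instantiating this at $\form' = \form$ and $\node = \node^\star$.

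I would first record an auxiliary observation: if $\restrictedform$ contains no counting subformulas (hence, by the syntactic restriction, no counting propositions either), then $\semfc{\restrictedform}{\tree_c}{\valuation} = \semf{\restrictedform}{\tree}{\valuation}$ for every valuation $\valuation$. The only clauses in which $\semfc{\cdot}{\cdot}{\cdot}$ and $\semf{\cdot}{\cdot}{\cdot}$ disagree are the counting clauses, which do not apply here; and the annotation differs from the original tree only in the assignment of fresh counting propositions $c$, which do not occur in $\restrictedform$. This immediately handles fixpoint, atomic, and modal-only subformulas.

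With this in hand, the main induction is routine on Boolean and base cases. For a modality $\modalf{m}{\form'}$, the annotation recurses from $\brel(\node,m)$, where $\form'$ holds, and the induction hypothesis applies. The two interesting cases are the counting subformulas. For $\countf{c}{\trail}{\restrictedform}{>}{\natn}$: by assumption there are strictly more than $\natn$ nodes $\node'$ reachable from $\node$ via $\trail$ with $\node' \in \semf{\restrictedform}{\tree}{\emptyset}$. By the auxiliary observation these are also in $\semfc{\restrictedform}{\tree_c}{\emptyset}$. The annotation process marks $\natn+1$ of them with the fresh $c$, so the set counted by $\semfc{}{}{}$ has cardinality $\natn+1 > \natn$. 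For $\countf{c}{\trail}{\restrictedform}{\leq}{\natn}$: the annotation marks every reachable node satisfying $\restrictedform$ with $c$, and by the auxiliary observation the set of such nodes is the same under both semantics, so its cardinality remains $\leq \natn$.

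The main obstacle I anticipate is ensuring that the annotations introduced for distinct counting subformulas do not interfere. This is guaranteed by two points that must be checked explicitly: each syntactic occurrence of a counting formula is given its own fresh counting proposition $c$, so annotations do not collide semantically; and the syntactic restriction disallowing counting under counting (and thus keeping $\restrictedform$ free of counting propositions) means that, along the induction, every counting subformula is visited at a single well-defined current node inherited from $\node^\star$ through the enclosing modalities, conjunctions, and disjunctions. Once this well-definedness is argued, the counting cases reduce to the cardinality arithmetic above and the proof closes.
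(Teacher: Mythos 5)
Your proof is correct and follows essentially the same route as the paper's: both argue by induction over the formula (the paper phrases it as recursion on the derivation of $\node^\star \in \semf{\form}{\tree}{\emptyset}$), both rest on the observation that $\semfc{\restrictedform}{\tree}{\valuation} = \semf{\restrictedform}{\tree}{\valuation}$ for counting-free $\restrictedform$, and both close the counting cases by noting that exactly $\natn+1$ nodes are annotated in the ``greater than'' case and that every counted node is annotated in the ``less than'' case. Your explicit check that fresh counting propositions do not interfere and that each counting subformula is visited at a single well-defined node is a detail the paper leaves implicit, but it does not change the argument.
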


\begin{proof}
  We proceed by recursion on the derivation $\node^\star \in \semf{\form}{\tree}{\emptyset}$. The cases where no counting formula is involved, thus including fixpoints, are immediate, as the selected nodes are identical. The disjunction, conjunction, and modality cases are also immediate by induction. The interesting cases are the counting formulas.
  
  For $\countf{c}{\trail}{\restrictedform}{>}{\natn}$, as there are exactly $\natn+1$ nodes annotated, the property is true by induction. For $\countf{c}{\trail}{\restrictedform}{\leq}{\natn}$, we rely on the fact that every counted node is annotated. We conclude by remarking that $\restrictedform$ does not contain a counting formula, thus we have $\semfc{\restrictedform}{\tree}{\valuation} = \semf{\restrictedform}{\tree}{\valuation}$ and $\semfc{\neg \restrictedform}{\tree}{\valuation} = \semf{\neg \restrictedform}{\tree}{\valuation}$.
\end{proof}

To every node $\node$, we associate $\fnode{\form}$, the largest subset of formulas of the lean selecting the node.
\begin{equation*}
\fnode{\form} = \{\form_0\mid \node \in \semf{\form_0}{\tree}{\emptyset} ,\form_0\in\lean{\form}\}
\end{equation*}

This is a $\form$-node as it contains one and exactly one proposition, and if it includes a modal formula $\modalf{\modals}{\psi}$, then  it also includes $\modalf{\modals}{\top}$. The tree $\stree(\tree,\form)$ is then built homomorphically to $\tree$.

In the remainder of this section, we write $\stree$ for $\stree(\tree,\form)$.
We now check that $\stree$ is consistent, starting with local consistency.

\begin{figure}
  \begin{mathpar}
    \inferrule*{\psi \in \lean{\form}}{\psi \induced \lean{\form}}\and
    \inferrule*{\psi_1 \induced \lean{\form} \\ \psi_2 \induced \lean{\form}}{\psi_1 \land \psi_2 \induced \lean{\form}} \and
    \inferrule*{\psi_1 \induced \lean{\form} \\ \psi_2 \induced \lean{\form}}{\psi_1 \lor \psi_2 \induced \lean{\form}} \and
    \inferrule*{ }{\true \induced \lean{\form}} \and
    \inferrule*{\psi \in (\Props_{\form} \cup \modalf{\modals}\true \cup C) }{\neg\psi \induced \lean{\form}}
  \end{mathpar}
  \caption{Formula induced by a lean}
  \label{fig:induced_by_lean}
\end{figure}

In the following, we say a formula $\psi$ is induced by the lean of $\form$,
written $\psi \induced \lean{\form}$, if it consists of the boolean combination of subformulas from the lean as defined in
Figure~\ref{fig:induced_by_lean}.

\begin{lemma}\label{lem:formulas_induced}
  Let $\modalf{\modals}\psi$ be a formula in $\lean{\form}$, and let $\psi^\prime$ be $\psi$ after unfolding its fixpoint formulas not under modalities. We have $\psi^\prime \induced \lean{\form}$.
\end{lemma}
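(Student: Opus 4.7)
The plan is to proceed by a well-founded induction on a measure $m(\chi)$ that counts the number of $\mu$-binders in $\chi$ not lying in the scope of any modality: set $m$ to zero on atomic and counting-free modal formulas, extend it additively through $\land$ and $\lor$, and let $m(\ufixpf{x}{\chi}) = 1 + m(\chi)$. The key arithmetic observation is that cycle-freeness forces every free occurrence of $x$ in $\chi$ to sit under some modality, so after one unfolding step each substituted copy of $\ufixpf{x}{\chi}$ is guarded, yielding $m(\chi[\ufixpf{x}{\chi}/x]) = m(\chi) < m(\ufixpf{x}{\chi})$. Iterating the unfolding therefore terminates and produces a formula $\psi^\prime$ whose only $\mu$-binders occur under modalities.

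Once $\psi^\prime$ is obtained, I would argue $\psi^\prime \induced \lean{\form}$ by structural induction on $\psi^\prime$. The atomic leaves $\true$, $\neg \true$, $p$, $\neg p$, $c$, $\neg c$ are all directly induced: $\Props_{\form} \cup C$ is contained in $\lean{\form}$, and the negation rule of Figure~\ref{fig:induced_by_lean} handles negated propositions and negated counting propositions. The conjunction and disjunction cases follow immediately from the induction hypothesis. The decisive case is a modal leaf $\modalf{\modals^\prime}{\chi^{\prime\prime}}$ or a negated modality $\neg\modalf{\modals^\prime}{\true}$: one must check $\modalf{\modals^\prime}{\chi^{\prime\prime}} \in \flcl{\form}$, which follows by starting from $\modalf{\modals}{\psi} \in \flcl{\form}$, descending via the modality rule to $\psi$, applying the fixpoint rule to mimic the unfoldings already performed, and finishing with structural decomposition of conjunctions, disjunctions, and modalities. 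Moreover, since $\modalf{\modals}{\psi}$ belongs to the lean, $\psi$ contains no counting operator, a property preserved by the closed substitutions performed during unfolding; hence $\modalf{\modals^\prime}{\chi^{\prime\prime}}$ is a navigating formula in $\flcl{\form}$ and so lies in $\lean{\form}$.

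The main obstacle is making precise that every modal subformula surfacing after unfolding lands in $\lean{\form}$. After substitution, one encounters new-looking formulas of the form $\modalf{\modals^\prime}{\ufixpf{x}{\chi}}$ which were not literally present in $\psi$; justifying their membership in $\flcl{\form}$ requires combining the unfolding rule $\fishrel{\ufixpf{x}{\chi}}{\chi[\ufixpf{x}{\chi}/x]}$ with structural decomposition along the path from $\modalf{\modals}{\psi}$, using the fact that Fisher--Ladner iterates until saturation. Once this bookkeeping is in place, the absence of counting subformulas in $\psi$ and in its unfoldings guarantees the navigational character of the modal leaves, and the remainder reduces to a routine structural induction.
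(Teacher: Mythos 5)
Your argument is correct and follows the same route as the paper, whose proof is simply the one-liner ``by definition of the lean and of the $\induced$ relation''; you have filled in the details that proof leaves implicit (termination of the unfolding via guardedness of fixpoint variables in cycle-free formulas, and membership of the surfacing modal subformulas in $\flcl{\form}$ via closure under decomposition and the unfolding rule). No gap.
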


\begin{proof}
  By definition of the lean and of the $\induced$ relation.
\end{proof}

\begin{lemma}\label{lem:complete_entailment}
  Let $\psi$ be a formula induced by $\lean{\form}$. We have $\node \in \semfc{\psi}{\tree}{\emptyset}$ if and only if $\fnode{\form} \vdash^{\form} \psi$.
\end{lemma}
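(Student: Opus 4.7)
The plan is to proceed by structural induction on the derivation of $\psi \induced \lean{\form}$ using the rules of Figure~\ref{fig:induced_by_lean}. The statement is essentially a bridge between the syntactic entailment relation of Figure~\ref{fig:entailmentnode} and the set-theoretic semantics $\semfc{\cdot}{\tree}{\emptyset}$, mediated by the definition of $\fnode{\form}$ as the set of lean formulas that select $\node$.

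For the base case $\psi \in \lean{\form}$, I would observe that every element of $\lean{\form}$ is either $\true$, an atomic or counting proposition, a formula of the form $\modalf{\modals}\true$, or a modal formula $\modalf{\modals}\restrictedform$; none of these match the head of any structural rule (disjunction, conjunction, or fixpoint) in Figure~\ref{fig:entailmentnode}. Hence the only way to derive $\fnode{\form} \vdash^\form \psi$ is via the membership rule, so $\fnode{\form} \vdash^\form \psi$ iff $\psi \in \fnode{\form}$, and by definition of $\fnode{\form}$ this is equivalent to $\node \in \semfc{\psi}{\tree}{\emptyset}$. For the base case $\neg \psi$ with $\psi \in \Props_\form \cup \{\modalf{\modals}\true\} \cup C$, the ``$\psi \notin \node$'' rule gives $\fnode{\form} \vdash^\form \neg \psi$ iff $\psi \notin \fnode{\form}$ iff $\node \notin \semfc{\psi}{\tree}{\emptyset}$; one then checks that for each of these three kinds of $\psi$ the semantic complement coincides with the interpretation of the syntactic negation $\semfc{\neg \psi}{\tree}{\emptyset}$ (immediate from Figure~\ref{formsem} for propositions, and from the definition of $\semfc{\neg\modalf{\modals}\true}{\tree}{\emptyset}$ as nodes where $\brel(\cdot,\modals)$ is undefined).

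The inductive cases for $\psi_1 \land \psi_2$ and $\psi_1 \lor \psi_2$ follow immediately from the induction hypotheses applied to each subformula, together with the fact that both the local entailment relation and the annotated semantics commute with conjunction and disjunction in the expected way. The case $\psi = \true$ is trivial on both sides.

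The main subtlety will be ensuring that for modal formulas in the lean, no ``unexpected'' derivation of $\fnode{\form} \vdash^\form \modalf{\modals}\restrictedform$ exists beyond membership — this is what justifies the equivalence with set-theoretic truth. Since Figure~\ref{fig:entailmentnode} has no rule whose conclusion is a modality (local entailment treats modal formulas opaquely), this syntactic check is straightforward but must be stated explicitly. A secondary point worth flagging is that the definition of $\fnode{\form}$ is to be read against the annotated tree $\tree_c$, so that counting propositions $c \in C$ appearing in the lean are interpreted by $\semfc{\cdot}{\tree}{\emptyset}$; with this reading the equivalence closes cleanly for the counting-proposition case as well.
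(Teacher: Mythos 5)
Your proof is correct and follows essentially the same route as the paper's: induction on the structure of the induced formula, with the base cases (membership and negated membership) discharged directly by the definition of $\fnode{\form}$ and the inductive cases for $\wedge$ and $\vee$ by the induction hypothesis. You simply make explicit two points the paper leaves implicit --- that no rule of the local entailment relation other than (non-)membership can conclude a lean formula, and that the node sets must be read against the annotated semantics so that counting propositions are handled --- both of which are accurate and harmless elaborations.
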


\begin{proof}
   We proceed by induction on $\psi$. The base cases (the formula is in the $\form$-node or is a negation of a lean formula not in the $\form$-node) hold by definition of $\fnode{\form}$. The inductive cases are straightforward as these formulas only contain fixpoints under modalities.
\end{proof}

\begin{lemma}\label{lem:complete_consistent}
  Let $\node_1$ and $\node_2$ such that $\brel(\node_1,\modals) = \node_2$ with $\modals \in \{\fc,\ns\}$. We have $\fbrel{\form}{\fnode{\form}_1}{\modals}{\fnode{\form}_2}$.
\end{lemma}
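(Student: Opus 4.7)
The proof plan is to decompose the consistency condition into the two required bi-implications (the forward modality clause and the backward modality clause) and reduce each to an application of Lemma~\ref{lem:complete_entailment}, using the semantics of modalities as the bridge between nodes in $\tree$ and $\form$-nodes.

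For the forward direction, I fix $\modalf{\modals}\psi \in \lean{\form}$. By the definition of $\fnode{\form}_1$ we have $\modalf{\modals}\psi \in \fnode{\form}_1$ iff $\node_1 \in \semfc{\modalf{\modals}\psi}{\tree}{\emptyset}$. By the semantics of modalities and the hypothesis $\brel(\node_1,\modals)=\node_2$, this is in turn equivalent to $\node_2 \in \semfc{\psi}{\tree}{\emptyset}$. To finish I want to invoke Lemma~\ref{lem:complete_entailment} to rewrite the right-hand side as $\fnode{\form}_2 \vdash^\form \psi$. For the backward direction, I use the tree property stated in the footnote of Section~\ref{subsec:trees}, namely $\brel(\node_1,\modals)=\node_2 \iff \brel(\node_2,\dual{\modals})=\node_1$, so the symmetric argument swaps the roles of $\node_1$ and $\node_2$ and of $\modals$ and $\dual{\modals}$.

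The main subtlety, which I expect to be the real technical content, is that Lemma~\ref{lem:complete_entailment} requires its formula to be induced by $\lean{\form}$, whereas a generic $\psi$ with $\modalf{\modals}\psi \in \lean{\form}$ may be a fixpoint formula (or a boolean combination of such) whose top-level shape does not syntactically belong to $\lean{\form}$. The remedy is Lemma~\ref{lem:formulas_induced}: unfolding all top-level fixpoints in $\psi$ (those not guarded by a modality) produces $\psi' \induced \lean{\form}$. Both sides of the desired equivalence are invariant under this unfolding: for the denotational side this is the fixed-point identity $\semf{\ufixpf{\var}{\restrictedform}}{\tree}{\valuation} = \semf{\restrictedform\{\subst{\ufixpf{\var}{\restrictedform}}{\var}\}}{\tree}{\valuation}$, lifted to $\semfc{\cdot}{\tree}{\cdot}$; for the entailment side this is the dedicated rule for $\ufixpf{\var}{\restrictedform}$ in Figure~\ref{fig:entailmentnode}. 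So I pass from $\psi$ to $\psi'$ on both sides, apply Lemma~\ref{lem:complete_entailment} to the induced formula $\psi'$, and return to $\psi$.

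Putting these pieces together yields the two bi-implications in the definition of $\fbrel{\form}{\fnode{\form}_1}{\modals}{\fnode{\form}_2}$, which concludes the proof. No case analysis beyond forward versus backward is needed, since the bridge lemma handles the structure of $\psi$ uniformly.
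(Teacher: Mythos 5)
Your proposal is correct and follows essentially the same route as the paper's proof: both pass from membership of $\modalf{\modals}\psi$ in $\fnode{\form}_1$ to $\node_2 \in \semfc{\psi}{\tree}{\emptyset}$ via the semantics of modalities, then unfold the top-level fixpoints of $\psi$ and invoke Lemmas~\ref{lem:formulas_induced} and~\ref{lem:complete_entailment}. The only difference is that you spell out the backward-modality clause explicitly, which the paper leaves as the symmetric case.
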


\begin{proof}
  Let $\modalf{\modals}\psi$ be a formula in $\lean{\form}$. We show that $\modalf{\modals}\psi \in \fnode{\form}_1 \iff \fnode{\form}_2 \vdash^\form \psi$. We have $\modalf{\modals}\psi \in \fnode{\form}_1$ if and only if 
  $\node_1 \in \semfc{\modalf{\modals}\psi}{\tree}{\emptyset}$ by definition of $\fnode{\form}_1$, which in turn holds if and only if $\node_2 = \brel(\node_1,\modals) \in \semfc{\psi}{\tree}{\emptyset}$. 
We now consider $\psi^\prime$ which is $\psi$ after unfolding its fixpoint formulas not under modalities. We have $\semfc{\psi^\prime}{\tree}{\emptyset} = \semfc{\psi}{\tree}{\emptyset}$ and we conclude by Lemmas \ref{lem:formulas_induced} and \ref{lem:complete_entailment}.
\end{proof}

We now turn to global consistency, taking counting formulas into account.

\begin{lemma}\label{comple1}
  Let $\form_s$ be a subformula of $\form$, and $\pathn$ be a path from the root in $\tree$ such that $\tree(\pathn) \in \semfc{\form_s}{\tree}{\emptyset}$. We then have $\pathn \vdash^{\form}_{\stree} \form_s$.
\end{lemma}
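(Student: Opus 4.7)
The plan is to prove the lemma by structural induction on $\form_s$, following the rules defining $\vdash^{\form}_{\stree}$ in Figure~\ref{fig:countingentailment}. For the base case where $\form_s$ contains no counting subformulas, the rule reduces the goal to establishing $\stree(\pathn) \vdash^{\form} \form_s$. Since $\stree$ is built homomorphically from $\tree$, we have $\stree(\pathn) = \fnode{\form}$ where $\node = \tree(\pathn)$. After unfolding any topmost fixpoints (using Lemma~\ref{lem:formulas_induced} to see that the result is induced by $\lean{\form}$), Lemma~\ref{lem:complete_entailment} converts the hypothesis $\tree(\pathn) \in \semfc{\form_s}{\tree}{\emptyset}$ directly into the required local entailment.

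The inductive cases for disjunction and conjunction are immediate from the definition of $\semfc{\cdot}{\tree}{\emptyset}$ and the induction hypothesis. For a modality $\modalf{m}{\form'}$, the assumption $\tree(\pathn) \in \semfc{\modalf{m}{\form'}}{\tree}{\emptyset}$ gives $\tree(\pathn m) \in \semfc{\form'}{\tree}{\emptyset}$; because $\stree$ is homomorphic to $\tree$ via the $path$ construction, $\stree(\pathn m)$ is defined, and the induction hypothesis applied at $\pathn m$ combined with the modality rule of Figure~\ref{fig:countingentailment} closes this case.

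The main obstacle is the counting case, which is where the annotation carried by $\tree_c$ is crucial. For $\countf{c}{\trail}{\restrictedform}{>}{\natn}$, the annotation step selected exactly $\natn+1$ nodes reachable from $\tree(\pathn)$ through some $\pathn' \in \trail$ that satisfy $\restrictedform$ and are marked with $c$. Each such annotated node $\node'$ corresponds to the $\stree$-node at path $\pathn\pathn'$, and by construction its $\fnode{\form}$ contains $c$; since $\restrictedform$ is counting-free, Lemma~\ref{lem:complete_entailment} yields $\node' \vdash^{\form} \restrictedform \land c$, so the side condition of the rule is satisfied with at least $\natn+1$ witnesses. For $\countf{c}{\trail}{\restrictedform}{\leq}{\natn}$, we argue dually: the annotation marks every reachable node that satisfies $\restrictedform$ with $c$, so the cardinality bound transfers from $\tree$ to $\stree$; moreover every reachable node either satisfies $\restrictedform$ (and is annotated) or satisfies $\neg\restrictedform$ (and is not annotated), which by Lemma~\ref{lem:complete_entailment} gives the required universal disjunction $(\restrictedform \land c) \lor (\neg\restrictedform \land \neg c)$ along the trail.

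Throughout, two auxiliary facts do most of the work and should be invoked explicitly: (i) paths in $\tree$ and $\stree$ coincide by homomorphic construction, so reachability through $\trail$ is transferred from one to the other without change; and (ii) Lemma~\ref{lem:annotated_satisfiability} justifies using $\semfc{\cdot}{\tree}{\emptyset}$ in place of $\semf{\cdot}{\tree}{\emptyset}$ for the root-level assumption, which legitimately allows the counting cases to rely on the $c$-marked semantics used in the rules of Figure~\ref{fig:countingentailment}.
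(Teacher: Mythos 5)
Your proposal is correct and follows essentially the same route as the paper: induction on $\form_s$, with the counting-free case handled by unfolding top-level fixpoints and appealing to Lemma~\ref{lem:complete_entailment}, and the two counting cases handled via the annotation of counted nodes with the counting proposition $c$ (exactly $\natn+1$ annotations for the ``greater than'' case, all selected nodes for the ``less than'' case, with non-counted nodes entailing $\neg\restrictedform \land \neg c$). The only differences are presentational — you invoke Lemma~\ref{lem:complete_entailment} directly where the paper says ``by induction,'' and you make the path-transfer and the role of Lemma~\ref{lem:annotated_satisfiability} explicit — neither of which changes the argument.
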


\begin{proof}
  We proceed by induction on $\form_s$.
  
  If $\form_s$ does not contain any counting formula, we consider $\form_s^\prime$ which is $\form_s$ after unfolding its fixpoint formulas not under modalities. We have $\semfc{\form_s^\prime}{\tree}{\emptyset} = \semfc{\form_s}{\tree}{\emptyset}$ and $\form_s^\prime \induced \lean{\form}$. We conclude by Lemma \ref{lem:complete_entailment}.
  
  For most inductive cases, the proof is immediate by induction, as  the formula size decreases.
  
  For $\countf{c}{\trail}{\restrictedform}{>}{\natn}$, we have by induction for every counted node $\pathn\pathn^\prime \vdash^\form_\stree \restrictedform$ and $\pathn\pathn^\prime \vdash^\form_\stree c$. We conclude by the conjunction rule and by the counting rule of Figure \ref{fig:countingentailment}.
    
  For $\countf{c}{\trail}{\restrictedform}{\leq}{\natn}$, we proceed as above for the counted nodes. For the nodes that are not counted, we have $\stree(\pathn\pathn^\prime) \vdash^\form \neg\restrictedform$ by Lemma \ref{lem:complete_entailment} (since $\neg\restrictedform \induced \lean{\form}$). We conclude by remarking that the node is not annotated by $c$, hence $\stree(\pathn\pathn^\prime) \vdash^\form \neg c$.
\end{proof}

We next show that the \psitree{\form} $\stree$ is actually built by the algorithm. The proof follows closely the one from \cite{geneves-pldi07}, with a crucial exception: we need to make sure there are enough instances of each formula. Indeed, in \cite{geneves-pldi07}, the algorithm uses a $\form$type (a subset of $\lean{\form}$) at most once on each branch from the root to a leaf of the built tree. This yields a simple condition to stop the algorithm and conclude the formula is unsatisfiable. However, in the presence of counting formulas, a given $\form$type may occur more than once on a branch. To maintain the termination of the algorithm, we bound the number of identical $\form$type that may be needed by $K(\form)$ as defined in Figure \ref{fig:boundK}. We thus need to check that this bound is sufficient to build a tree for any satisfiable formula.

We recall that $\form$ is a satisfiable formula and $\tree$ is a smallest tree such that $\form$ is satisfied, and $\node^\star$ is a witness of satisfiability.

We proceed in two steps: first we show that counted nodes (with counted propositions) imply a bound on the number of identical $\form$types on a branch for a smallest tree. Second, we show that this minimal marking is bound by $K(\form)$.

In the following, we call counted nodes and node $\node^\star$ \emph{annotations}. We define the \emph{projection} of an annotation on a path. Let $\pathn$ be a path from the root of the tree to a leaf. An annotation projects on $\pathn$ at $\pathn_1$ if $\pathn = \pathn_1 \pathn_2$, the annotation is at $\pathn_1 \pathn_m$, and $\pathn_2$ shares no prefix with $\pathn_m$.

\begin{lemma}\label{lem:annotation_present}
  Let $\stree^\prime$ be the annotated tree, $\pathn$ a path from the root of the tree to a leaf, $\node_1$ and $\node_2$ two distinct nodes of $\pathn$ such that $\fnode{\form}_1 = \fnode{\form}_2$. Then either annotations projects both on $\pathn$ at $\node_1$ and $\node_2$, or an annotation projects strictly between $\node_1$ and $\node_2$.
\end{lemma}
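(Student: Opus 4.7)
The plan is to argue by contradiction against the minimality of $\tree$. Suppose the conclusion fails at some path $\pathn$ from the root to a leaf, with distinct nodes $\node_1$, $\node_2$ lying on $\pathn$ with $\fnode{\form}_1 = \fnode{\form}_2$. Failure means: (a) no annotation projects strictly between $\node_1$ and $\node_2$ on $\pathn$, and (b) it is not the case that annotations project on $\pathn$ at both $\node_1$ and $\node_2$. Without loss of generality, assume $\node_1$ is the closer of the two to the root and that no annotation projects on $\pathn$ at $\node_2$ (the symmetric case is analogous).

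The idea is then to build a strictly smaller tree $\tree^\dagger$ by ``splicing away'' the segment of $\pathn$ strictly between $\node_1$ (inclusive) and $\node_2$ (exclusive), together with the sibling subtrees hanging off of that segment, reattaching $\node_2$ and its subtree directly in place of $\node_1$. Because $\fnode{\form}_1 = \fnode{\form}_2$, the two nodes agree on all lean formulas (including the modal formulas $\modalf{m}{\top}$ that encode which successors exist), so Lemma~\ref{lem:complete_consistent} applied above and below the splice point shows that local consistency at the new attachment site is preserved. The node $\node^\star$ witnessing satisfaction of $\form$ in $\tree$ still lies in $\tree^\dagger$: it cannot have been strictly between $\node_1$ and $\node_2$ on $\pathn$ by hypothesis (a), and hypothesis (b) ensures that if $\node^\star$ coincided with $\node_2$ then it is not ``lost'' by the splice -- we reattach $\node_2$.

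The main obstacle is verifying that satisfaction of $\form$ at $\node^\star$ is preserved in $\tree^\dagger$ under the annotated semantics $\semfc{\cdot}{\cdot}{\emptyset}$. For non-counting subformulas this is handled exactly as in \cite{geneves-pldi07}: the equality of $\form$-nodes combined with Lemma~\ref{lem:complete_entailment} shows that the same set of lean formulas is satisfied at corresponding nodes of $\tree^\dagger$, and fixpoint subformulas use strictly fewer unfoldings. For counting subformulas $\countf{c}{\trail}{\restrictedform}{\#}{\natn}$, we must argue that the multiset of annotated, counted witnesses reachable through $\trail$ is preserved. Here hypothesis (a) is crucial: no counted annotation lies strictly between $\node_1$ and $\node_2$ on $\pathn$, so the splice removes no witnesses of any ``$>\,\natn$''-formula; hypothesis (b) then rules out double-counting of witnesses at $\node_1$ and $\node_2$. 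For ``$\leq\,\natn$''-formulas the bound can only improve since we remove potential witnesses rather than create them, and by construction of the annotation every counted node retains its proposition $c$.

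Putting the pieces together, $\tree^\dagger$ is strictly smaller than $\tree$ but still satisfies $\form$ at (the image of) $\node^\star$, contradicting the choice of $\tree$ as a smallest model. Therefore the assumed failure at $\pathn$, $\node_1$, $\node_2$ cannot occur, which establishes the lemma.
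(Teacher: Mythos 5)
Your overall strategy --- a contradiction with the minimality of $\tree$ obtained by splicing out the segment between $\node_1$ and $\node_2$ and checking that satisfaction of $\form$ survives --- is the same as the paper's. However, the proposal skips the step that carries essentially all of the technical weight. After the splice, a counted witness that was reached from a counting node by a path $\pathn'_1\pathn_2\pathn_c$ passing through both $\node_1$ and $\node_2$ is now reached by the shorter path $\pathn'_1\pathn_c$. The issue is not whether that witness node survives in the smaller tree (it does, since no annotation projects into the removed region), but whether the shortened path is still an instance of the regular expression $\trail$ --- and, conversely, whether shortening can create new instances. Your text treats trail reachability as if it were mere ancestry. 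The paper devotes most of its proof to exactly this point: it uses the restricted shape of trails ($\trail_1^\star,\ldots,\trail_n^\star,\trail_{n+1}$ with no repetition inside subtrails), shows that formulas mentioning the counting proposition must occur in $\fnode{\form}_2$ and hence in $\fnode{\form}_1$, locates ``stopping indices'' $i\le j\le n$ for the two prefixes inside the starred part of the trail, and concludes that $\pathn'_1\pathn_c\in\trail$ if and only if $\pathn'_1\pathn_2\pathn_c\in\trail$. Without this equivalence neither counting case goes through.

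Two further points. First, your claim that for $\countf{}{\trail}{\restrictedform}{\leq}{\natn}$ ``the bound can only improve since we remove potential witnesses'' is unsound: shortening paths can turn a non-instance of $\trail$ into an instance and thereby cause a \emph{new} node to be counted in the smaller tree. The paper explicitly rules this out using the trail-instance equivalence together with the fact that every reachable uncounted node satisfies $\neg\restrictedform$ (and $\neg c$). Second, your construction deletes the off-path subtrees hanging from $\node_1$; since your hypotheses allow an annotation to project on $\pathn$ at $\node_1$, that annotation (possibly $\node^\star$ itself or a counted node) would be lost. The paper handles this case separately, modifying the graft according to the first modality on the segment so that the annotated node's relevant subtree is preserved. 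As written, the proposal establishes the skeleton of the argument but not the parts that make the lemma true.
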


\begin{proof}
  We proceed by contradiction: we assume there is no annotation that projects between $\node_1$ and $\node_2$ and at most one of them has an annotation that projects on it. Without loss of generality, we assume that $\node_2$ is below $\node_1$ in the tree.
  
  Assume neither $\node_1$ nor $\node_2$ is annotated (through projection). We consider the tree $\stree_s$ where $\node_2$ is ``grafted'' upon $\node_1$. Formally, let $\pathn_1$ be the path to $\node_1$ and $\pathn_1\pathn_2$ the path to $\node_2$. We remove every node whose path is of the form $\pathn_1\pathn_3$ where $\pathn_2$ is not a prefix of $\pathn_3$, and we also remove node $\node_2$. The mapping $\brel^\prime$ from nodes and modalities to nodes is the same as before for the node that are kept except for $\node_1$, where $\brel^\prime(\node_1,\fc) = \brel(n_2,\fc)$ and $\brel^\prime(\node_1,\ns) = \brel(\node_2,\ns)$. For every path $\pathn$ of $\stree$, let $\pathn_s$ be the potentially shorter path if it exists (i.e., if it was not removed when pruning the tree). More precisely, if $\pathn^\prime = \pathn^\prime_1 \pathn^\prime_3$ where $\pathn^\prime_1$ is a prefix of $\pathn_1$ and the paths are disjoint from there, then $\stree_s(\pathn^\prime) = \stree(\pathn^\prime)$. If $\pathn^\prime = \pathn_1 \pathn_2 \pathn_3$, then $\stree_s(\pathn_1\pathn_3) = \stree(\pathn^\prime)$.
  
  We now show that $\stree_s$ still satisfies $\form$ at $\node^\star$, a contradiction since this tree is strictly smaller than $\stree$.
  
  First, as there was no annotation projected, $\node^\star$ is still part of this tree at a path $\pathn_s$. We show that we have $\pathn_s \vdash^{\form}_{\stree_s} \form$ by induction on the derivation $\pathn \vdash^{\form}_{\stree} \form$. Let $\pathn^\prime \vdash^{\form}_{\stree} \form^\prime$ in the derivation, assuming that $\pathn^\prime_s$ is defined.
  
  The case where $\form^\prime$ does not mention any counting formula is trivial: $\stree(\pathn^\prime) = \stree_s(\pathn^\prime_s)$ thus local entailment is immediate.
  
  Conjunction and disjunction are also immediate by induction.
  
We now turn to the modality case, $\modalf{\modals} \form^\prime$ where $\form^\prime$ contains a counting formula. If $\pathn^\prime$ is neither $\pathn_1$ nor $\pathn_1\pathn_2$, we deduce from the fact that $\pathn^\prime_s$ is defined that $(\pathn^\prime \modals)_s$ is also defined and we conclude by induction. We now assume that $\pathn^\prime$ is either $\pathn_1$ or $\pathn_1\pathn_2$ and find a contradiction. First, remark that $\pathn^\prime \vdash^{\form}_{\stree} \modalf{\modals} \form^\prime$ implies that the navigation generated by $\modalf{\modals} \form^\prime$ is in $\stree(\pathn_1) = \stree(\pathn_1 \pathn_2)$. As each syntactic occurrence of a counting formula mentions a distinct counting proposition $c$, this is possible only if the counting formula is under a fixpoint or under another counting formula, both of which are impossible.
    
We finally turn to the counting case $\countf{c}{\trail}{\restrictedform}{\#}{\natn}$. We say that a path \emph{does not cross over} when this path does not contain $\node_1$ nor $\node_2$. For nodes that are reached using paths that do not cross over, we conclude by induction that they are also counted. We show that the remaining nodes reached through a crossover remain reachable (there cannot be any counted node in the part of the tree that is removed since counted nodes are annotated and there was no annotation in the part removed). Without loss of generality, assume that $\pathn^\prime$ is a prefix of $\pathn_1$ (the counting formula is in the ``top'' part of the tree), and let $\pathn_n$ be the path from the counting formula to the counted node ($\pathn_n$ is an instance of the trail $\trail$). This path is of the shape $\pathn^\prime_1 \pathn_2 \pathn_c$, with $\pathn_1 = \pathn^\prime \pathn^\prime_1$. We now show that the path $\pathn^\prime_1 \pathn_c$ is an instance of $\trail$ if and only if $\pathn_n$ is an instance of the trail, thus the same node is still reached.

Recall that $\trail$ is of the shape $\trail_1, \ldots, \trail_n, \trail_{n+1}$ where $\trail_1$ to $\trail_n$ are of the form $\trail_{r_i}^\star$ and where $\trail_{n+1}$ does not contain a repeated trail. We say that a prefix $\pathn_p$ of a path $\pathn$ \emph{stops at $i$} if there is a suffix $\pathn_s$ such that $\pathn_p\pathn_s$ is still a prefix of $\pathn$, $\pathn_p\pathn_s \in \trail_1, \ldots, \trail_i$, and there is no shorter suffix $\pathn^\prime_s$ and $j$ such that $\pathn_p\pathn^\prime_s \in \trail_1, \ldots, \trail_j$. (Intuitively, $\trail_i$ is the trail being used when matching the end of $\pathn_p$.) If there are several satisfying indices $i$, we consider the smallest.

We first show that a counting proposition is necessarily mentioned in a formula of $\fnode{\form}_2$, by contradiction. Assume no counting proposition is mentioned, yet the counting crossed-over. This can only occur for a ``less than'' counting formula that reaches $n_2$ which is not counted (because the formula was false), and if there is no path whose $\pathn_n$ is a strict prefix that is an instance of $\trail$ (otherwise, by definition of the lean and of $nav$ (Figure \ref{fig-nav}), a formula of the form $\nav{\trail^\prime}{(\restrictedform \land c) \lor (\neg\restrictedform \land \neg c)}$ would be true and thus would be present, contradicting the assumption that no counting proposition is mentioned). Since $\fnode{\form}_1 = \fnode{\form}_2$, the same is true for $\fnode{\form}_1$, a direct contradiction to the fact that $n_2$ is also reached by the trail. Thus counting propositions are mentioned in $\fnode{\form}_1$ and $\fnode{\form}_2$.

We next show that there are $i \leq j \leq n$ such that both $\pathn^\prime_1$ stops at $i$ and $\pathn^\prime_1 \pathn_2$ stop at $j$, i.e., neither $i$ nor $j$ may be $n+1$. Recall that $\trail_{n+1}$ does not contain a repeated subtrail. Thus every formula of $\fnode{\form}_2$ mentioning $c$ is of the form $\nav{\trail^\prime}{\restrictedform}$, where $\trail^\prime$ does not contain a repetition. We consider the largest such formula. Since $n_1$ is before $n_2$ in the path from the counting node to the counted node, a similar formula with a larger trail or with a repetition must occur in $\fnode{\form}_1$, contradicting $\fnode{\form}_1 = \fnode{\form}_2$.

Consider next the suffixes $\pathn_s^1$ and $\pathn_s^2$ computed when stating that the paths stop at $i$ and $j$. These suffixes correspond to the path matching the end of $\trail_i$ and $\trail_j$, respectively (before the next iteration or switching to the next subtrail). They have matching formulas in $\fnode{\form}_1$ and $\fnode{\form}_2$. As the formulas are present in both nodes, then the remainder of the paths ($\pathn_2\pathn_c$ and $\pathn_c$) are instances of $(\pathn_s^1 | \pathn_s^2) \trail_i \ldots \trail_{n+1}$, thus $\pathn^\prime_1 \pathn_c$ is an instance of $\trail$ if and only if $\pathn_n$ is.

In the case of ``greater than'' counting, we conclude immediately by induction as the same nodes are selected (thus there are enough). In the case of ``less than'', we need to check that no new node is counted in the smaller tree. Assume it is not the case for the formula $\countf{}{\trail}{\restrictedform}{\leq}{\natn}$, thus there is a path $\pathn_n \in \trail$ to a node satisfying $\restrictedform$. As the same node can be reached in $\stree$, and as we have $\stree(\pathn^\prime \pathn_n) \vdash^{\form} \neg \restrictedform$ by induction, we have a contradiction.

This concludes the proof when neither $\node_1$ nor $\node_2$ is annotated. The proof is identical when $\node_2$ is annotated. If $\node_1$ is annotated, we look at the first modality between $\node_1$ and $\node_2$. If it is a $\fc$, then we build the smaller tree by doing $\brel^\prime(\node_1,\fc) = \brel(n_2,\fc)$ (we remove the $\ns$ subtree from $\node_2$ instead of $\node_1$). Symmetrically, if the first modality is a $\ns$, we consider  $\brel^\prime(\node_1,\ns) = \brel(\node_2,\ns)$ as smaller tree. The rest of the proof proceeds as above.
\end{proof}

\begin{theorem}[Completeness]\label{thm:completeness}
  If $\form$ is satisfiable, then a satisfying tree is built.
\end{theorem}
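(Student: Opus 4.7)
The plan is to show that the tree $\stree = \stree(\tree, \form)$ constructed above is produced by the algorithm. Applying Lemma \ref{comple1} with $\form_s = \form$ and $\pathn$ the path to $\node^\star$ yields $\pathn \vdash^{\form}_{\stree} \form$, and by construction $\stree$ carries no pending backward modality at its root, so once $\stree$ enters $ST$ the algorithm detects $ST \vdash \form$ and returns $1$. It therefore suffices to show that every subtree of $\stree$ is eventually added to $ST$, which I prove by induction on subtree depth. The base case for leaves of $\stree$ is immediate, since they correspond to leaves of $\tree$ and hence their $\form$-node contains neither $\modalf{\fc}\top$ nor $\modalf{\ns}\top$, so the first iteration admits them paired with two empty children. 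In the inductive step, an internal subtree $(\node, \stree_1, \stree_2)$ has both children already in $ST$ by induction, the local consistency conditions $\fbrel{\form}{\node}{i}{\stroot{\stree_i}}$ come from Lemma \ref{lem:complete_consistent}, and the backward modality requirement on each $\stree_i$ is immediate since $\stree_i$ is a genuine child subtree in $\stree$.

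The main obstacle is verifying $\nmax(\node, \stree_1, \stree_2) \leq K(\form) + 2$ along every root-to-leaf branch. I rely on a counting argument driven by Lemma \ref{lem:annotation_present}: if a branch contains $k$ identical $\form$-nodes $\node_1, \ldots, \node_k$, then each of the $k-1$ consecutive pairs either consumes one annotation strictly between the pair (pairwise distinct across pairs), or forces both endpoints to be annotated. Splitting the $k-1$ pairs into $p$ ``between'' pairs and $q$ ``both-endpoint'' pairs, the first class contributes $p$ distinct annotations on the branch while the subgraph on $\{\node_1, \ldots, \node_k\}$ formed by the $q$ ``both-endpoint'' edges is a union of paths and so carries at least $q + 1$ annotated vertices whenever $q \geq 1$. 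A case distinction on $q = 0$ versus $q \geq 1$ yields in both cases $a \geq k - 1$, where $a$ is the total number of annotations lying on the branch.

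To conclude, the annotated tree carries at most $K(\form)$ annotations arising from counting subformulas (a ``$>$'' subformula contributes $\natn + 1$ and a ``$\leq$'' subformula contributes at most $\natn$, both bounded by the inductive definition of $K$), plus one extra annotation for $\node^\star$, giving $a \leq K(\form) + 1$ on any branch. Combining this with the bound above yields $k \leq K(\form) + 2$, which matches exactly the threshold enforced by the algorithm on $\nmax$. This closes the induction and establishes that $\stree$ itself is eventually added to $ST$, completing the proof.
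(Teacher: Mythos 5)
Your proposal is correct and follows essentially the same route as the paper's proof: invoke Lemma~\ref{comple1} to get $\pathn \vdash^{\form}_{\stree} \form$, show the \psitree{\form} $\stree(\tree,\form)$ is assembled bottom-up by the algorithm (which the paper delegates to the argument of the prior work it cites), and bound the number of identical $\form$-nodes per branch via Lemma~\ref{lem:annotation_present} together with the bound of the number of annotations by $K(\form)+1$. Your explicit $p$/$q$ case analysis just fills in the step the paper states without detail, namely that Lemma~\ref{lem:annotation_present} yields at most one more identical node than the number of annotations on a branch.
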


\begin{proof}
  The proof proceeds as in \cite{geneves-pldi07}, we only need to check there are enough copies of each node to build every path. Let $\pathn$ be a path from the root of the tree to the leaves. By Lemma~\ref{lem:annotation_present}, there are at most $n+1$ identical nodes in this path, where $n$ is the number of annotations. The number of annotations is $c+1$ where $c$ is the number of counted nodes. We show by an immediate induction on the formula $\form$ that $c$ is bound by $K(\form)$ as defined in Figure~\ref{fig:boundK}. We conclude by remarking that $K(\form) + 2$ is the number of identical nodes we allow in the algorithm.
\end{proof}

\subsection{Complexity} \label{complexity}

We now show that the time complexity of the satisfiability algorithm is exponential in the formula size. This is achieved in two steps: we first show that the lean size is linear in the formula size,
then we show that the algorithm has a single exponential complexity with relation to the lean size.

\begin{lemma}
The lean size is linear in terms  of the original formula size.
\end{lemma}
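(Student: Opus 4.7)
The plan is to decompose the lean according to its definition
$$\lean{\form}=\{\modalf{\modals}{\top}\}\cup \{\modalf{\modals}{\restrictedform}\in \flcl{\form}\} \cup \Props_{\form} \cup C \cup \{p_{\overline\form}\}$$
and bound each component separately by a linear function of $|\form|$. The first component has constant size (four modalities). The proposition set $\Props_{\form}$ is clearly bounded by the number of proposition occurrences in $\form$, hence by $|\form|$. The set $C$ of counting propositions contains exactly one element per syntactic counting subformula of $\form$, so $|C| \leq |\form|$. The singleton $\{p_{\overline\form}\}$ contributes one element. Thus, the whole argument reduces to bounding the number of modal subformulas in $\flcl{\form}$.

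For the Fisher--Ladner closure, I would first treat the non-counting part exactly as in the standard $\mu$-calculus: the rules $\fishrel{\form_1 \wedge \form_2}{\form_i}$, $\fishrel{\form_1 \vee \form_2}{\form_i}$, $\fishrel{\modalf{\modals}\form}{\form}$, and $\fishrel{\ufixpf{\var}{\form}}{\form\{\subst{\ufixpf{\var}{\form}}{\var}\}}$ all produce subformulas that can be identified with syntactic subformulas of $\form$ (modulo the convention that a fixpoint variable abbreviates its binding fixpoint). This yields at most $|\form|$ distinct elements, by the standard argument.

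The key additional step concerns counting subformulas, via the rule $\fishrel{\countf{c}{\trail}{\restrictedform}{\#}{\natn}}{nav(\countf{c}{\trail}{\restrictedform}{\#}{\natn})}$. Here one has to verify that $nav(\countf{c}{\trail}{\restrictedform}{\#}{\natn})$, together with all formulas obtained by further applications of $\fishrel{}{}$, contributes only linearly in $|\trail| + |\restrictedform|$ to the closure. A straightforward induction on $\trail$ using the defining clauses of $\nav{\cdot}{\cdot}$ shows that $|nav(\countf{c}{\trail}{\restrictedform}{\#}{\natn})| = O(|\trail| + |\restrictedform|)$: the clauses for $\epsilon$, $\modals$, comma, and union each contribute a constant overhead per trail constructor, and the $\trail^\star$ clause introduces one fresh fixpoint binder per star. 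Since $\restrictedform$ contains no further counting subformulas (by the syntactic restriction in Figure~\ref{normalformsyn}), the closure of $nav(\countf{c}{\trail}{\restrictedform}{\#}{\natn})$ is again handled by the standard $\mu$-calculus argument and contributes $O(|\trail| + |\restrictedform|)$ modal subformulas.

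Summing over all syntactic counting subformulas of $\form$ and adding the contribution of the non-counting part, the total number of modal subformulas in $\flcl{\form}$ is $O(|\form|)$, which combined with the bounds above gives $|\lean{\form}| = O(|\form|)$. The main subtlety, and the point I would write out most carefully, is the $\trail^\star$ case, where one must check that the fresh fixpoint introduced by $\nav{\trail^\star}{\restrictedform} = \ufixpf{\var}{nav(\restrictedform)\vee\nav{\trail}{\var}}$ behaves as a single new binder and does not, through repeated unfolding, generate more than linearly many distinct modal subformulas in the closure.
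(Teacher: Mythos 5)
Your overall decomposition matches the paper's: bound the propositions, the counting propositions, and the modal subformulas contributed by $nav$ separately, and reduce the rest to the non-counting case of \cite{geneves-pldi07}. But one step fails as literally stated, and it is precisely the point the paper's proof is built around. You claim that $|nav(\countf{c}{\trail}{\restrictedform}{\#}{\natn})| = O(|\trail|+|\restrictedform|)$ because the clauses for $\epsilon$, $\modals$, comma, and union ``each contribute a constant overhead per trail constructor.'' For the union clause this is false if $|\cdot|$ means the size of the formula as a syntax tree: $\nav{\trail_1\mid\trail_2}{\restrictedform} = \nav{\trail_1}{\restrictedform}\vee\nav{\trail_2}{\restrictedform}$ duplicates the continuation, so a sequence of unions under commas, such as $(m_1|m_2),(m_3|m_4),\ldots$, makes the written-out $nav$ formula grow exponentially in the trail length. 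What is linear is the number of \emph{distinct} subformulas: the two disjuncts share the continuation, so the closure $\flcl{\form}$, and hence the lean (a set), grows only by the union of the two contributions. The paper opens its proof with exactly this observation (``the size of the lean is the number of elements it contains; the size of each element does not matter'') and explicitly discharges the union duplication by noting that $\restrictedform$ still occurs only once in the lean. Your induction must therefore be restated as a bound on the number of distinct subformulas (DAG size), not on formula size; otherwise the comma case breaks, since the continuation fed into $\nav{\trail_1}{\cdot}$ is no longer of bounded size.

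A second, smaller omission: in the ``less than'' case, $nav(\countf{c}{\trail}{\restrictedform}{\leq}{\natn})= \nav{\trail}{(\restrictedform \land c) \lor (\neg\restrictedform \land \neg c)}$ introduces both $\restrictedform$ and its negation normal form. This stays linear only because negation normal form is linear in size and counting cannot occur under counting; the paper states this explicitly and you should too. By contrast, the $\trail^\star$ case that you single out as the main subtlety is the routine part: one unfolding of each fixpoint suffices to saturate the closure, exactly as in the non-counting setting, so it adds nothing beyond the standard argument.
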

\begin{proof}[Proof Sketch]
First note that the size of the lean is the number of elements it contains; the size of each element does not matter.

It was shown in \cite{geneves-pldi07} that the size of the lean generated by a non-counting formula is linear with respect to the formula size.

We now describe the case for counting formulas.
The lean consists of propositions and of modal subformulas, including the ones generated by the navigation of counting formulas (Figure~\ref{fig-nav}).
Moreover, each counting formula adds one fresh counting proposition. 
In the case of ``less than'' formulas $\countf{}{\trail}{\psi}{\leq}{k}$, a duplication occurs due to the consideration of the negated normal form of $\psi$. Since there is no counting under counting, this duplication and the fact that the negated normal form of a formula is linear in the size of the original formula (Figure~\ref{normalneg}) result in the lean remaining linear.
Another duplication  occurs in the case of counting formulas of the form $\countf{}{\trail_1|\trail_2}{\restrictedform}{\#}{k}$. This duplication does not double the size of the lean, however, since $\restrictedform$ still occurs only once in the lean, thus the number of elements in the lean induced by $\nav{\trail_1}{\restrictedform} \vee \nav{\trail_2}{\restrictedform}$ is the same as the sum of the ones in $\nav{\trail_1}{\restrictedform}$ and in $\nav{\trail_2}{\cdot}$.
\end{proof}

\begin{theorem}\label{complexitythm}

The satisfiability algorithm for the logic is decidable in time $2^{O(n)}$, where $n$ is the size of the lean.

\end{theorem}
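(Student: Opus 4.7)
The plan is to decompose the analysis into three parts: the size of the state space, the number of iterations of the main loop, and the work performed per iteration. First, I would establish the basic counts. Since each $\form$-node is a subset of $\lean{\form}$, we have $|\fNodes{\form}| \leq 2^n$. The bound $K(\form)$ of Figure~\ref{fig:boundK}, being a sum over counting thresholds $\natn$ that themselves appear in the formula, is at most $2^{O(n)}$. The local consistency test $\fbrel{\form}{\node_1}{\modals}{\node_2}$ compares at most $n$ entries on each side and is therefore computable in time polynomial in $n$.

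Next, I would bound the number of iterations of the main loop. The driving observation is that both the consistency test and the pending-modality check $\modalf{\dual{i}}{\top} \in \stroot{\stree_i}$ only inspect the root $\form$-node of each candidate subtree $\stree_i$. Thus when computing $AUX$, what the algorithm effectively consumes from $ST$ is only the set of root $\form$-nodes together with enough information about each subtree to later certify counting formulas. Since there are at most $2^n$ distinct $\form$-nodes, and the $\nmax$ condition caps the depth at which a given $\form$-node may be reused along a branch by $K(\form) + 2 = 2^{O(n)}$, one can argue that the number of rounds before $AUX \subseteq ST$ is bounded by $2^{O(n)}$: each round must contribute either a new root configuration or a strict progress toward saturation of the depth budget.

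For each iteration, producing $AUX$ amounts to enumerating triples $(\node,\stree_1,\stree_2)$ with $\node \in \fNodes{\form}$ and $\stree_i \in \{\emptyset\} \cup ST$ that pass the consistency and $\nmax$ tests. There are $2^{O(n)}$ such candidate triples, each checked in time polynomial in $n$. The test $ST \vdash \form$ evaluates the global entailment relation of Figure~\ref{fig:countingentailment}: it navigates paths of length $2^{O(n)}$ and, for each counting subformula, tallies reachable nodes up to the threshold $\natn \leq 2^{O(n)}$. With standard memoization over (path, subformula) pairs, this evaluation is also within $2^{O(n)}$ per candidate root.

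The hard part, I expect, is the root-equivalence argument: counting formulas nominally depend on the entire subtree structure, and two trees sharing a root may in principle disagree on which counted propositions occur below. My plan to handle this mirrors the completeness proof: because $K(\form) + 2$ bounds the useful number of repetitions of any $\form$-node on a path, only $2^{O(n)}$ genuinely distinct subtree summaries can contribute to counts reaching a given root. Combining the three bounds yields an overall running time of $2^{O(n)} \cdot 2^{O(n)} = 2^{O(n)}$, as claimed.
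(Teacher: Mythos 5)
Your proposal follows essentially the same decomposition as the paper's own proof sketch: the $2^n$ bound on distinct $\form$-nodes, the bound $K(\form)$ on repetitions of a node along a branch (hence $2^{O(n)}$ iterations), and an exponential cost per iteration for forming triples, checking consistency, and evaluating the global entailment relation, multiplied together to give $2^{O(n)}$. Your additional remarks on memoizing the entailment check and on subtree summaries for counting are refinements the paper leaves implicit, but they do not change the argument, which is correct at the same level of rigor as the paper's sketch.
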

\begin{proof}[Proof Sketch]
The maximum number of considered nodes is the number of distinct tree nodes which is $2^n$, the number of subsets of the lean. 
For a given formula $\phi$, the number of occurrences of the same node in the tree is bounded by $K(\form)\leq k*m$, where
$k$ is the greatest constant occurring in the counting formulas and $m$ is the number of counting subformulas of $\form$. 
Hence the number of steps of the algorithm is bounded by $2^n*k*m$.

At each iteration, the main operation performed by the algorithm is the composition of trees stored in $AUX$. The cost of each iteration consists in: the different searches needed to form the necessary triples \((\node,\stree_1,\stree_2)\), the  $\nmax$ function and $R^\form$. Since the total number of nodes is exponential, and the number of different subtrees too, therefore the maximum number of newly formed trees (triples) at each step has also an exponential bound. The function $\nmax$ performs a single traversal of the tree which is also exponential.
Since the entailment relation involved in the definition of $R^\form$ is local, $R^\form$ is performed in linear time. Computing the containment $AUX \subseteq ST$ and the union $ST \cup AUX$ are linear operations over sets of exponential size.

The stop condition of the algorithm is checked by the global entailment relation. It involves traversals parametrized by the number of trees, the number of nodes in each tree, the number of traversals for the entailment relation of counting formulas, and $K(\form)$. Its time complexity is bounded by $(2^n*k*m)^3$. 

Hence, the total time complexity of the algorithm is bounded by $(2^n*k*m)^{k^\prime}$, for some constant $k^\prime$.
\end{proof}


\section{Related Work}  \label{sec:relatedwork}

\paragraph{Counting over trees}

The notion of Presburger Automata for trees, combining both regular 
constraints on the children of nodes and numerical constraints given by Presburger 
formulas, has independently been introduced by Dal Zilio and Lugiez \cite{964013} and Seidl et al. \cite{DBLP:conf/icalp/SeidlSMH04}.  Specifically, Dal Zilio and Lugiez \cite{964013} propose a modal logic for unordered trees called Sheaves logic. This logic allows to impose certain arithmetical constraints on children nodes but lacks recursion (i.e., fixpoint operators) and inverse navigation.  Dal Zilio and Lugiez consider the satisfiability and the membership problems. 
 Demri and Lugiez \cite{DBLP:conf/cade/DemriL06} showed by means of an automata-free decision procedure that this logic is only  PSPACE-complete. Restrictions like {\em $\prop_1$ nodes have no more ``children'' than $\prop_2$ nodes}, are succinctly expressible by this approach. Seidl et al. \cite{DBLP:conf/icalp/SeidlSMH04} introduce a fixpoint Presburger logic, which, in addition to numerical constraints on  children nodes, also supports recursive forward navigation. For example, expressions like {\em the descendants of $\prop_1$ nodes have no more ``children'' than the number of children of descendants of $\prop_2$ nodes} are efficiently represented.
This means that constraints can be imposed on sibling nodes (even if they are deep in the tree) by forward recursive navigation but not on distant nodes which are not siblings. 

Compared to the work presented here, neither of the two previous approaches can efficiently support constraints like {\em there are more than $5$ ancestors of ``$\prop$'' nodes}. 

Furthermore, due to the lack of backward navigation, the works found in \cite{964013,DBLP:conf/icalp/SeidlSMH04,DBLP:conf/cade/DemriL06} are not suited for succinctly capturing  XPath expressions. Indeed, it is well-known that expressions with backward modalities are exponentially more succinct than their forward-only counterparts \cite{DBLP:conf/doceng/GenevesR05,685998}.


There is poor hope to push the decidability envelope much further for counting constraints.
Indeed, it is known from \cite{DBLP:conf/icalp/KlaedtkeR03,DBLP:conf/cade/DemriL06,Balder09} that the equivalence problem is undecidable  for XPath expressions with counting operators of the form:
\begin{itemize}
\item $\PathExpr_1[\qualifcount{\PathExpr_2} = \qualifcount{\PathExpr_3}]$,  or
\item $\PathExpr_1[\qualifposition = \qualifcount{\PathExpr_2}]$.
\end{itemize}
This is the reason why logical frameworks that allow comparisons between counting operators limit counting by restricting the $\PathExpr$  to immediate children nodes \cite{964013,DBLP:conf/icalp/SeidlSMH04}.
In this paper, we chose a different tradeoff: comparisons are restricted to constants but at the same time comparisons along more general paths are permitted. 



\paragraph{Counting over graphs}
The $\mu$-calculus is a propositional modal logic augmented with least and greatest fixpoint operators \cite{DBLP:conf/icalp/Kozen82}. Kupferman, Sattler and Vardi study a $\mu$-calculus with graded modalities where one can 
express, e.g., that a graph node has at least $n$ successors satisfying a certain property \cite{DBLP:conf/cade/KupfermanSV02}. The 
modalities are limited in scope since they only count immediate successors of a given node. A similar notion in trees consists in counting immediate children nodes, as performed by the counting formula $\modalf{\fc}{\countf{}{\ns^*}{\form}{\#}{k}}$, where $\phi$ describes the property to be counted. Compared to graded modalities of \cite{DBLP:conf/cade/KupfermanSV02}, we consider trees and we can extend the ``immediate successor'' notion to nodes reachable from regular paths, involving reverse and recursive navigation.

A recent study \cite{Bonatti-et-al-ICALP-06} focuses on extending the $\mu$-calculus with inverse modalities \cite{685998}, nominals \cite{DBLP:conf/cade/SattlerV01}, and graded modalities of \cite{DBLP:conf/cade/KupfermanSV02}. If only two of the above constructs are considered,  satisfiability of the enriched calculus is EXPTIME-complete \cite{Bonatti-et-al-ICALP-06,Bianco-lics09}.
However, if all of the above constructs are considered simultaneously, the calculus becomes undecidable \cite{Bonatti-et-al-ICALP-06}. 
The present work shows that this undecidability result in the case of graphs does not preclude decidable tree logics combining such features.

\paragraph{XPath-like counting extensions}\label{sec:gradedpaths}


The proposed logic can be the target for the compilation of a few more sophisticated counting features, considered as syntactic sugars (and that may come at the potential extra cost of their translation).

In particular, XPath allows nested counting, as  in the expression $$\text{self::book}[\text{chapter}[\text{section}>1]>1,$$ which selects the current ``book'' node provided it has at least two ``chapter'' child nodes which in turn must contain at least two ``section'' nodes each.
For a simple set of formulas, formulas that count only on children nodes, such nesting can be translated into ordinary logical formulas. 
For instance, the logical formulation of the above XPath expression can be captured as follows:
\[\text{book} \wedge \modalf{\fc}{\ufixpf{x}{\left(\text{chapter} \wedge \psi \wedge \modalf{\ns}{\ufixpf{y}{\text{chapter} \wedge \psi \vee \modalf{\ns}{y} }}\right) \vee \modalf{\ns}{x} }} 
\] where \(\psi=\modalf{\fc}{\ufixpf{x}{(\text{section} \wedge \modalf{\ns}{\ufixpf{y}{\text{section} \vee \modalf{\ns}{y} }}) \vee \modalf{\ns}{x} }} 
\).

In \cite{marx-tods05}, Marx introduced an  ``until'' operator for extending XPath's expressive power to be complete with respect to first-order logic over trees. This operator is trivially expressible in the present logic, owing to the use of the fixpoint binder.  We can even combine counting features with the ``until'' operator and express properties that go beyond the expressive power of the XPath standard. For instance, the following formula states that ``\emph{starting from the current node, and until we reach an ancestor named $a$, every ancestor has at least 3 children named $b$}'':
$$\ufixpf{x}{\left(\modalf{\fc}{\countf{}{\ns^*}{b}{>}{2}} \wedge \ufixpf{y}{\modalf{\invfc}{x}\vee \modalf{\invns}{y}}\right)  \vee a}$$

These extensions come at an extra cost, however. It is not difficult to observe (by induction) that, given a formula $\form$ with subformulas $\psi_1,...,\psi_n$ counting only on children nodes, if formulas $\psi_1,...,\psi_n$ are replaced by their expansions in $\form$, yielding a formula $\form'$, then $|lean(\form')|\leq |lean(\form)|*k^l$, where $k$ is greatest numerical constraint of the counting subformulas, and $l$ is the greatest level nesting of counting subformulas. 
As a consequence of Theorem~\ref{complexitythm}, the logic extended with nested formulas counting on children nodes and formulas counting on children nodes under the scope of a fixpoint operator can be decided in time $2^{O(n*k^l)}$. 
\section{Conclusion} \label{sec:conclusion}
We introduced a modal logic of trees equipped with (1) converse modalities, which allow to succinctly express forward and backward navigation, (2) a least fixpoint operator for recursion, and (3) cardinality constraint operators for expressing numerical occurrence constraints on tree nodes satisfying some regular properties. A sound and complete algorithm is presented for testing satisfiability of logical formulas.
This result is surprising since the corresponding logic for graphs is undecidable \cite{Bonatti-et-al-ICALP-06}. 

The decision procedure for the logic is exponential time w.r.t. to the formula size.
The logic captures regular tree languages with cardinality restrictions, as well as the navigational fragment of XPath equipped with counting features.  Similarly to backward modalities, numerical constraints do not extend the logical expressivity beyond regular tree languages. Nevertheless they enhance the succinctness of the formalism as they provide useful shorthands for otherwise exponentially large formulas.

This exponential gain in succinctness makes it possible to extend static analysis to a larger set of XPath and XML schema features in a more efficient way. We believe the field of application of this logic may go beyond the XML setting. For example, in verification of linked data structures \cite{DBLP:conf/lfcs/MannaSZ07,kuncak08,DBLP:journals/acta/HabermehlIV10} reasoning on tree structures with in-depth cardinality constraints seems a major issue. Our result may help building solvers that are attractive alternatives to those based on non-elementary logics such as SkS \cite{Thatcher68}, like Mona~\cite{mona_user_manual}.

%

\bibliographystyle{plain}
\bibliography{ctl}

\end{document}